\newcounter{thm}
\numberwithin{equation}{section}
\newtheorem{theorem}[thm]{Theorem}
\newtheorem{lemma}{Lemma}
\newtheorem{prop}{Proposition}
\theoremstyle{definition}
\newtheorem{remark}{Remark}
\newtheorem*{remark*}{Remark}
\newtheorem*{example*}{Example}
\newcommand{\linePage} {\noindent\makebox[\linewidth]{\rule{\textwidth}{1pt}} \\}
\newcommand\con{{\mathcal{C}}}
\newcommand{\dI} { \,\mathrm{d}}
\newcommand{\norm}[1]{\left\| #1 \right\|}
\newcommand{\R}{\mathbb{R}}
\newcommand{\Ss}{\mathbb{S}}
\renewcommand{\Re}{\operatorname{Re}}
\newcommand{\rk} {\right}
\newcommand{\lk} {\left}
\newcommand{\const}{{ \rm const\,}}
\newcommand{\Tdiag}{T_{\rm{diag} }}
\newcommand{\Toff}{T_{\rm{off} }}
\newcommand\calF{\mathcal{F}}
\newcommand\calG{\mathcal{G}}
\newcommand\calT{\mathcal{T}}
\newcommand\calL{\mathcal{L}}
\newcommand\calH{\mathcal{H}}
\newcommand\calGam{\tilde{\Gamma}}
\begin{document}

\setlength{\parskip}{0.6em} 
\setlength{\parindent}{1.5em}

\title{Stability of a Fermionic $N+1$ Particle System with Point Interactions}
\author{Thomas Moser and Robert Seiringer}
\affil{IST Austria, Am Campus 1, 3400 Klosterneuburg, Austria}
\date{June 28, 2017}

\maketitle

\begin{abstract}
We prove that a system of $N$ fermions interacting  with an additional particle via point interactions is stable if the ratio of the mass of the additional particle to the one of the fermions is larger than some critical  $m^*$. The value of $m^*$ is independent of $N$ and turns out to be less than $1$. This fact has important implications for the stability of the unitary Fermi gas. We also characterize the domain of the  Hamiltonian of this model, and   establish the validity of the Tan relations for all wave functions in the domain.
\end{abstract}

\section{Introduction}

Models of particles with point interactions are ubiquitously used in physics, as an idealized description whenever the range of the interparticle interactions is much shorter than other relevant length scales. They were introduced in the early days of quantum mechanics as models of nuclear interactions \cite{wigner,bethe,Thomas1935,fermi}, but have proved useful  in other branches of physics, like polarons  (see \cite{Massignan} and references there) and cold atomic gases \cite{zwerger}. While the two-particle problem is mathematically completely understood \cite{albb}, for more than two particles  the existence of a self-adjoint  Hamiltonian that is bounded from below and models pairwise point interactions is a challenging open problem. It is known that such a Hamiltonian can only exist for fermions with at most two components (or two different species of fermions), due do the Thomas effect  \cite{Thomas1935,Braaten06,Tamura1991,Yafaev74}. 

For $N\geq 2$, we consider here a system of $N$ (spinless) fermions of mass $1$, interacting with another particle of mass $m$ via point interactions. The latter are characterized by a parameter $\alpha\in \R$, where $-1/\alpha$ is proportional to the scattering length of the pair interaction \cite{albb}. 
Purely formally, the Hamiltonian of the system can be thought of as 
\begin{equation}\label{ham1}
 H = - \frac 1 {2 m} \Delta_{x_0} - \frac 1 2 \sum_{i=1}^N \Delta_{x_i} + \gamma \sum_{i=1}^N \delta(x_0-x_i)
\end{equation}
where $x_i \in \R^3$, and $\gamma$ represents an infinitesimal coupling constant. Models of this kind have been studied extensively  in the literature (see, e.g., \cite{Castin10,correggi2012stability, Correggi2015, Correggi15, Dell, Dimock, Faddeev,Finco2012, Mich13, Mich16, MichX, Minlos2011, Minlos2012, Minlos2014,Ter,werner}) and can be defined via a suitable regularization procedure. More precisely, the formal expression \eqref{ham1} can be given a  meaning in terms of a suitable quadratic form \cite{correggi2012stability, Dell, Finco2012}, which will be introduced in the next section. However, only in case the quadratic form is stable, i.e.,  bounded from below,  does it give rise to a  unique self-adjoint operator and hence gives a precise meaning to \eqref{ham1}. We are interested in this question of stability. We shall show that there exists a critical mass $m^*$, independent of $N$, such that stability holds for $m>m^*$. The value of $m^*$
is determined by a two-dimensional optimization problem of a certain analytic function. 
A numerical evaluation of the  expression yields $m^*\approx 0.36$. 

In particular, the system under consideration is stable for $m=1$. This latter case is of particular importance, in view of constructing a model of a gas of spin $1/2$ fermions close to the unitary limit, where the scattering length becomes much larger than the range of the interactions. For $N+1$ such fermions, our result can be interpreted as proving the existence of such a model in the sector of total spin $(N-1)/2$, i.e., $1$ less than the maximal value. Of course stability holds trivially in the sector of total spin $(N+1)/2$, since the particles do not interact in this case due to the total antisymmetry of the spatial part of the wave functions. We note that stability in other spin sectors is still an open problem, whose solution would be of great interest because of the relevance of the model for cold atomic gases (see \cite{zwerger} and references there). For its solution, it is necessary to understand the problem of stability for general systems of $N+M$ particles mutually interacting via point interactions. In the case $N=M=2$, a numerical analysis  suggests stability, see \cite{Mich16} for the case $m=1$ and \cite{Endo} for the full range of mass ratios where stability for  the $2+1$ problem holds, i.e., for $0.0735 < m < (0.0735)^{-1} \approx 13.6$ \cite{Braaten06}.

\section{Model and Main Results}

Because of translation invariance,  it is convenient to separate the center-of-mass motion and to introduce relative coordinates $X =  \lk(m x_0+\sum_{i=1}^N x_i \rk)/(m+N)$, $y_i = x_i-x_0$ 
for $1\leq i \leq N$ in the usual way. With their aid we can formally write the operator $H$ in \eqref{ham1} as $ H = H_{\rm{cm}} + \frac {m+1}{2m} H_{\rm rel} $, 
where $H_{\rm cm} = -(2(m+N))^{-1} \Delta_X$ and
\begin{equation}\label{ham}
H_{\rm rel} = - \sum_{i=1}^N \Delta_{y_i} - \frac 2{m+1} \sum_{1\leq i < j\leq N} \nabla_{y_i} \cdot \nabla_{y_j} + \tilde\gamma \sum_{i=1}^N \delta(y_i) 
\end{equation}
for $\tilde\gamma = 2m\gamma/(m+1)$. The latter operator acts on purely anti-symmetric functions of $N$ variables only.

The formal expression \eqref{ham} can be given a  meaning in terms of a suitable quadratic form \cite{correggi2012stability, Dell, Finco2012}, which will be introduced in the next subsection.

\subsection{Quadratic Form and Stability}
The model under consideration here is defined via a quadratic form $F_\alpha$ as follows. 
For $\mu>0$ and $q_i\in \R^3$, $1\leq i\leq N$, let 
\begin{equation}\label{def:G}
G(q_1, \ldots, q_N) \coloneqq \left ( \sum_{i=1}^N q_i^2 +\frac 2 {m+1}  \sum_{1\leq i<j\leq N} q_i \cdot q_j + \mu \right)^{-1} 
\end{equation}
The quadratic form $F_\alpha$ has the domain
\begin{equation}\label{def:DF}
D(F_\alpha) = \left\{ u \in L_{\rm as}^2(\R^{3 N}) \mid u = w + G \xi , w \in H_{\rm as}^1(\R^{3 N}), \xi \in H_{\rm as}^{1/2}(\R^{3 (N-1)})\right\} 
\end{equation}
where $G\xi$ is short for the function with Fourier transform  
\begin{equation}
\widehat{G \xi} (q_1,\ldots, q_N) =  G(q_1, \ldots, q_N) \sum_{i=1}^N (-1)^{i+1}  \hat \xi(q_1, \ldots ,q_{i-1}, q_{i+1}, \ldots, q_N)
\label{eq:quadraticForm} 
\end{equation}
and the subscript ``as'' indicates functions that are antisymmetric under permutations. For $u\in D(F_\alpha)$, we have
\begin{align}\nonumber
F_\alpha(u) & = \left\langle w \left|  - \sum\nolimits_{i=1}^N \Delta_i  -\frac 2{m+1} \sum\nolimits_{1\leq i<j\leq N} \nabla_i \cdot \nabla_j  + \mu  \right| w\right\rangle  - \mu \norm{u}^2_{L^2(\R^{3N})} \\ & \quad + N \left( \alpha \norm{\xi}_{L^2(\R^{3(N-1)})}^2 +   \Tdiag(\xi)+  \Toff(\xi) \right)  \label{def:Fal}
\end{align}
where
\begin{align}\nonumber
\Tdiag(\xi) & \coloneqq \int_{\R^{3(N-1)}}  |\hat \xi(s,\vec q)|^2  L(s, \vec q) \dI s \dI \vec q  \\
\Toff(\xi) & \coloneqq (N-1) \int_{\R^{3N}} \hat \xi^\ast (s ,\vec q) \hat \xi (t ,\vec q) G(s,t,\vec q)  \dI s \dI t \dI \vec q \label{def:T}
\end{align}
We introduced $\vec q \coloneqq (q_1,\dots,q_{N-2})$ for short,  and the function $L$ is given by 
\begin{equation}\label{def:L}
L(q_1, \ldots, q_{N-1}) \coloneqq 2 \pi^2 \lk ( \frac {m(m+2)}{(m+1)^2} \sum_{i=1}^{N-1} q_i^2 + \frac {2 m}{(m+1)^2} \sum_{1\leq i < j\leq N-1} q_i \cdot q_j + \mu \rk)^{1/2}
\end{equation}
Note that since $G\xi \not\in H^1(\R^{3N})$ for $\xi\neq 0$, the decomposition of $u$ as $u=w+G\xi$ is unique. Moreover, while $w$ depends on $\mu$,  $\xi$ is independent of the choice of $\mu$. 

Clearly $\Tdiag(\xi)$ is bounded above and below by $\|\xi\|_{H^{1/2}(\R^{3(N-1)})}^2$, and also $\Toff(\xi)$ is bounded in $H^{1/2}(\R^{3(N-1)})$ (see Sect.~\ref{sec:prelim}). 
One readily checks that both $D(F_\alpha)$ and $F_\alpha(u)$ are actually independent of $\mu$ for $\mu>0$, even though $\Tdiag(\xi)$ and $\Toff(\xi)$ depend on $\mu$. 
The domain $D(F_\alpha)$ is also independent of $\alpha\in \R$. Moreover, under the scaling $u \to u_\lambda(\,\cdot\,) = \lambda^{3N/2} u ( \lambda \, \cdot\,)$ for $\lambda>0$, $F_\alpha$ changes as $F_\alpha(u_\lambda) = \lambda^{2} F_{\lambda^{-1} \alpha}(u)$. In particular, $F_0$ is homogeneous of order $2$ under scaling.

The quadratic form $F_\alpha$ can be obtained as a limit of a suitably regularized version of \eqref{ham}, see \cite{Dell} and  \cite[Appendix~A]{correggi2012stability}. As we shall see in the next subsection, the parameter  $\alpha$ equals $-2\pi^2/a$, where $a$ denotes  the scattering length of the pair interaction. We note that other choices for quadratic forms are possible in the unitary case $\alpha=0$ for small mass $m$, see \cite{Correggi2015}.

To state our main result, we define, for any $m>0$, 
\begin{align}\nonumber &
\Lambda(m)= \sup_{s,K\in\R^3, Q>0}  \frac{s^2 + Q^2}{\pi^2 (1+m)} \ell_m(s,K,Q)^{-1/2}  \int_{\R^3}  \frac 1{t^2}   \ell_m(t,K,Q)^{-1/2}   \\ & \qquad\qquad \times  \frac { \left| (s+AK)\cdot (t+AK) \right| }{ \left[ (s+AK)^2 + (t+AK)^2 + \frac m{1+m}( Q^2 +A K^2) \right]^2  - \left[ \frac 2{(1+m)} (s+AK)\cdot(t+AK)\right]^2}    \dI t \label{def:Lm}
\end{align}
where $A:=(2+m)^{-1}$ and 
\begin{equation}\label{def:lm}
\ell_m(s,K,Q) := \left( \frac{m}{(m+1)^2} (s+K)^2 + \frac m{m+1} (s^2 + Q^2)  \right)^{1/2}
\end{equation}
A somewhat simpler, equivalent expression for $\Lambda(m)$, involving only the supremum over two positive parameters, will be given in Section~\ref{sec:numerics}. 
We shall show in Section~\ref{sec:Lambda}  that $\Lambda(m)$ is finite, and satisfies the upper bound
\begin{equation}\label{bound:L}
\Lambda(m) \leq   \frac{ 4 (1+m)^2 (2+4m +m^2)^{3/2}}{\sqrt{2} \pi  \left[ m(m+2) \right]^{3} }  
\end{equation}
Note  that \eqref{bound:L} implies, in particular, that $\lim_{m\to\infty} \Lambda(m) = 0$.

Our first main result is the following:

\begin{theorem}\label{thm1}
For any $   \xi \in H_{\rm as}^{1/2}(\R^{3 (N-1)})$, $\mu>0$ and  $N\geq 2$, 
\begin{equation}\label{eq:thm}
 \Toff(\xi) \geq  -\Lambda(m)   \Tdiag(\xi)
\end{equation}
In particular, if $m$ is such that $\Lambda(m)<1$, then $F_\alpha$ is closed and bounded from below by 
\begin{equation}\label{thm:lb}
F_\alpha(u) \geq \left\{ \begin{array}{ll} 0 & \text{for $\alpha\geq 0$} \\  - \left( \frac {\alpha}{2 \pi^2(1- \Lambda(m))} \right)^2 \|u\|_{L^2(\R^{3N})}^2 & \text{for $\alpha<0$} \end{array}\right.
\end{equation}
for all $u\in D(F_\alpha)$.
\end{theorem}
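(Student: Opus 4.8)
The plan is to prove the relative form bound \eqref{eq:thm} by a weighted Cauchy--Schwarz (Schur-test type) estimate, and then to deduce closedness and the lower bound \eqref{thm:lb} by optimizing over the auxiliary parameter $\mu$.

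\textbf{The inequality \eqref{eq:thm}.} First I would regard $\Tdiag$ and $\Toff$ in \eqref{def:T} as the diagonal and off-diagonal parts of a single quadratic form in $\hat\xi$, with $\vec q=(q_1,\dots,q_{N-2})$ playing the role of a spectator parameter. Since $G>0$, I would bound $|\Toff(\xi)|$ from above: inserting a positive weight $f$ and applying $2ab\le a^2+b^2$ together with the symmetry $G(s,t,\vec q)=G(t,s,\vec q)$, the estimate $|\Toff(\xi)|\le\Lambda(m)\,\Tdiag(\xi)$ reduces to the \emph{pointwise} (in $s$ and $\vec q$) bound
\begin{equation}
(N-1)\int_{\R^3} G(s,t,\vec q)\,\frac{f(t,\vec q)}{f(s,\vec q)}\dI t \ \le\ \Lambda(m)\,L(s,\vec q).\nonumber
\end{equation}
Any admissible $f$ yields a valid constant, and the theorem asserts that the weight can be tuned so that the optimal constant is exactly \eqref{def:Lm}.

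The decisive step --- and the reason $\Lambda(m)$ is independent of $N$ --- is the change of variables that decouples the spectators. Completing the square in the pair $(s,t)$ inside $G^{-1}$ shows that the cross terms with $\vec q$ are removed by the shift $s\mapsto s+AK$, $t\mapsto t+AK$ with $A=(2+m)^{-1}$ and $K=\sum_{i=1}^{N-2}q_i$; after this shift one finds
\begin{equation}
G(s,t,\vec q)^{-1} = (s+AK)^2+(t+AK)^2+\tfrac{2}{m+1}(s+AK)\cdot(t+AK)+\tfrac{m}{m+1}\big(Q^2+AK^2\big),\nonumber
\end{equation}
while $L(s,\vec q)=2\pi^2\,\ell_m(s,K,Q)$ with $\ell_m$ as in \eqref{def:lm}, the entire spectator dependence entering only through $K\in\R^3$ and the single scalar $Q^2=\sum_{i=1}^{N-2}q_i^2+\tfrac{m+1}{m}\mu>0$. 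Thus the pointwise bound depends on $\vec q$ only through $(K,Q)$, and taking the supremum over all $K\in\R^3$, $Q>0$ (which over-counts the configurations realizable at any fixed $N$, hence is safe and $N$-free) produces exactly \eqref{def:Lm}. The weights $\ell_m(s,K,Q)^{-1/2}$, $\ell_m(t,K,Q)^{-1/2}$ and the algebraic denominator in \eqref{def:Lm} then arise by writing $G=(P+R)^{-1}$ with $R=\tfrac{2}{m+1}(s+AK)\cdot(t+AK)$ and $P$ the remaining symmetric terms, and using $(P+R)^{-1}=(P-R)(P^2-R^2)^{-1}$.

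\textbf{Main obstacle.} The hard part is the sharp choice of weight and the ensuing $t$-integration that pins the constant down to precisely \eqref{def:Lm} rather than a cruder majorant, together with the proof that the resulting two-parameter supremum is finite. I would establish finiteness and the bound \eqref{bound:L} separately, as deferred to Section~\ref{sec:Lambda}, since it requires controlling the $t$-integral uniformly in $(s,K,Q)$.

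\textbf{Closedness and the lower bound \eqref{thm:lb}.} Granting \eqref{eq:thm}, I use \eqref{def:Fal}. The operator acting on $w$ there is nonnegative (its Fourier symbol is the positive-definite form inverted in \eqref{def:G}), so the first term in \eqref{def:Fal} is $\ge0$ and the regular part is $\ge-\mu\norm{u}_{L^2(\R^{3N})}^2$; moreover $\Tdiag(\xi)+\Toff(\xi)\ge(1-\Lambda(m))\Tdiag(\xi)\ge0$ once $\Lambda(m)<1$. Since the positive-definite quadratic form in \eqref{def:L} is bounded below by its constant term, $\Tdiag(\xi)\ge 2\pi^2\sqrt{\mu}\,\norm{\xi}_{L^2}^2$. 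For $\alpha\ge0$ the bracket in \eqref{def:Fal} is then nonnegative, so $F_\alpha(u)\ge-\mu\norm{u}_{L^2(\R^{3N})}^2$ for every $\mu>0$; letting $\mu\to0^+$ (legitimate as $F_\alpha$ is $\mu$-independent) gives $F_\alpha(u)\ge0$. For $\alpha<0$ I instead estimate $\alpha\norm{\xi}^2+(1-\Lambda(m))\Tdiag(\xi)\ge\big(\alpha+2\pi^2(1-\Lambda(m))\sqrt{\mu}\big)\norm{\xi}_{L^2}^2$ and choose $\mu=\big(\tfrac{-\alpha}{2\pi^2(1-\Lambda(m))}\big)^2$, so the right-hand side vanishes and $F_\alpha(u)\ge-\mu\norm{u}^2$, which is exactly \eqref{thm:lb}. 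Finally, closedness follows because for $\Lambda(m)<1$ the form norm $F_\alpha(u)+C\norm{u}^2$ is equivalent to $\norm{w}_{H^1(\R^{3N})}^2+\norm{\xi}_{H^{1/2}(\R^{3(N-1)})}^2$ --- using that the $w$-term is comparable to $\norm{w}_{H^1}^2$, that $\Tdiag$ is comparable to $\norm{\xi}_{H^{1/2}}^2$, and that $\Toff$ is $H^{1/2}$-bounded, all noted after \eqref{def:L} --- so that the unique decomposition $u=w+G\xi$ identifies $D(F_\alpha)$ with the complete space $H^1\oplus H^{1/2}$.
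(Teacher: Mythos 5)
Your reduction of \eqref{eq:thm} to a weighted Schur bound on the \emph{positive} kernel $G$ is a two-sided estimate: it would prove $|\Toff(\xi)|\le\Lambda(m)\,\Tdiag(\xi)$, hence in particular $\Toff(\xi)\le\Lambda(m)\,\Tdiag(\xi)$. This cannot work with any constant smaller than $1$: as noted in the remark at the end of Section~\ref{sec:proof} (citing \cite{correggi2012stability}), the inequality $\Toff(\xi)\le\Tdiag(\xi)$ is \emph{false} for suitable $\xi$, for every $m$. So any constant obtainable from a Schur test on $G$ itself is necessarily $\ge 1$, and since the theorem is only useful when $\Lambda(m)<1$ (e.g.\ $\Lambda(1)\approx 0.34$), your claimed pointwise bound must in fact fail for every admissible weight $f$ in that regime. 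The missing idea is that only the \emph{negative part} of the off-diagonal operator needs to be controlled. For fixed spectators $\vec q$, the relevant operator $\tau^0$ commutes with the reflection $t\mapsto -t-2AK$ (after the shift you correctly identify); the reflected kernel generates a nonnegative operator (Lemma~\ref{lem:nl2}, via a Gaussian integral representation), whence the negative part of $\tau^0$ has kernel proportional to $G(s,t,\vec q)-G(s,-t-2AK,\vec q)$ (Lemmas~\ref{lem:nl3} and \ref{lem1}). Because $\Toff$ enters $F_\alpha$ with a plus sign (the fermionic sign), one may simply drop the positive part and run the Schur test on $\tau^0_-$ alone. In your notation $G(s,t,\vec q)=(P+R)^{-1}$ and $G(s,-t-2AK,\vec q)=(P-R)^{-1}$, so the kernel of $\tau^0_-$ is proportional to $|R|/(P^2-R^2)$, which vanishes when $(s+AK)\cdot(t+AK)=0$; this cancellation is precisely the numerator $\left|(s+AK)\cdot(t+AK)\right|$ in \eqref{def:Lm}. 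Your algebraic identity $(P+R)^{-1}=(P-R)(P^2-R^2)^{-1}$ instead produces the numerator $P-R$, which is much larger and does not reproduce \eqref{def:Lm}.

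There is a second gap: your pointwise bound retains the prefactor $(N-1)$, so for fixed $(K,Q)$ it can only yield a constant growing linearly in $N$. The paper eliminates this factor by using the antisymmetry of $\varphi=L^{1/2}\hat\xi$ to rewrite the $(N-1)$ identical terms as a sum over the $N-1$ variable pairs, Eq.~\eqref{212}, and then the multiplicative Schur weight $h(s,\vec q)=s^2\prod_j q_j^2$ converts that sum into the single factor $s^2+Q^2$ appearing in \eqref{def:Lm}; this is the third use of antisymmetry highlighted in the paper's remark. Your treatment of closedness and of the lower bound \eqref{thm:lb} (choosing $\mu=\alpha^2\bigl(2\pi^2(1-\Lambda(m))\bigr)^{-2}$ and using $\Tdiag(\xi)\ge 2\pi^2\sqrt{\mu}\,\|\xi\|_{L^2}^2$) matches the paper, but it rests entirely on \eqref{eq:thm}, which your argument as it stands cannot deliver.
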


We note that \eqref{thm:lb} follows immediately from \eqref{eq:thm} in combination with the simple estimate $\Tdiag(\xi)\geq 2\pi^2\sqrt{\mu}\|\xi\|^2_{L^2(\R^{3(N-1)})}$. For $\alpha<0$, one simply chooses $\mu = \alpha^2 (2\pi^2(1-\Lambda(m))^{-2}$, using the independence of $F_\alpha(u)$ of $\mu$. As a closed and bounded from below quadratic form, $F_\alpha$ 
gives rise to a unique self-adjoint operator \cite[Thm.~VIII.15]{RS1} for $\Lambda(m)<1$. 
We shall describe it in detail in the next subsection.

The lower bound \eqref{thm:lb} is sharp as $m\to \infty$. For $\alpha<0$, $-(\alpha/2\pi^2)^2$ equals the binding energy of the two-particle problem with point interactions. As $m\to \infty$, only one of the fermions can be bound, hence the ground state energy becomes independent of $N$ in that limit. 

We emphasize that in contrast to the previous work \cite{correggi2012stability,Correggi15} we prove a bound on the critical mass that is independent of $N$ and, in particular, does not grow as $N$ gets large. Also the lower bound \eqref{thm:lb} is independent of $N$.

We shall prove Theorem~\ref{thm1} in Section~\ref{sec:proof} below. The right side of \eqref{bound:L}  turns out to be less than $1$ for $m\geq  1.76$, and hence stability holds in that region. For $m=1$, it equals about $2.47$, however, and is larger than $1$ as a result of the rather crude bounds leading to \eqref{bound:L}. 

In Section~\ref{sec:numerics} we  evaluate $\Lambda(m)$  numerically and show that it satisfies $\Lambda(1)<1$. In fact, from the numerics we shall see that $\Lambda(m)<1$ if $m\geq 0.36$ (see Fig.~\ref{fig:lambda1}). Recall that $F_\alpha$ is known to be unbounded from below \cite[Thm.~2.2]{correggi2012stability} for any $N\geq 2$ 
for $m\leq 0.0735$. In particular, the critical mass for stability satisfies $0.0735 < m^* <0.36$.

\begin{figure}
\begin{center}
\includegraphics[width = 11cm]{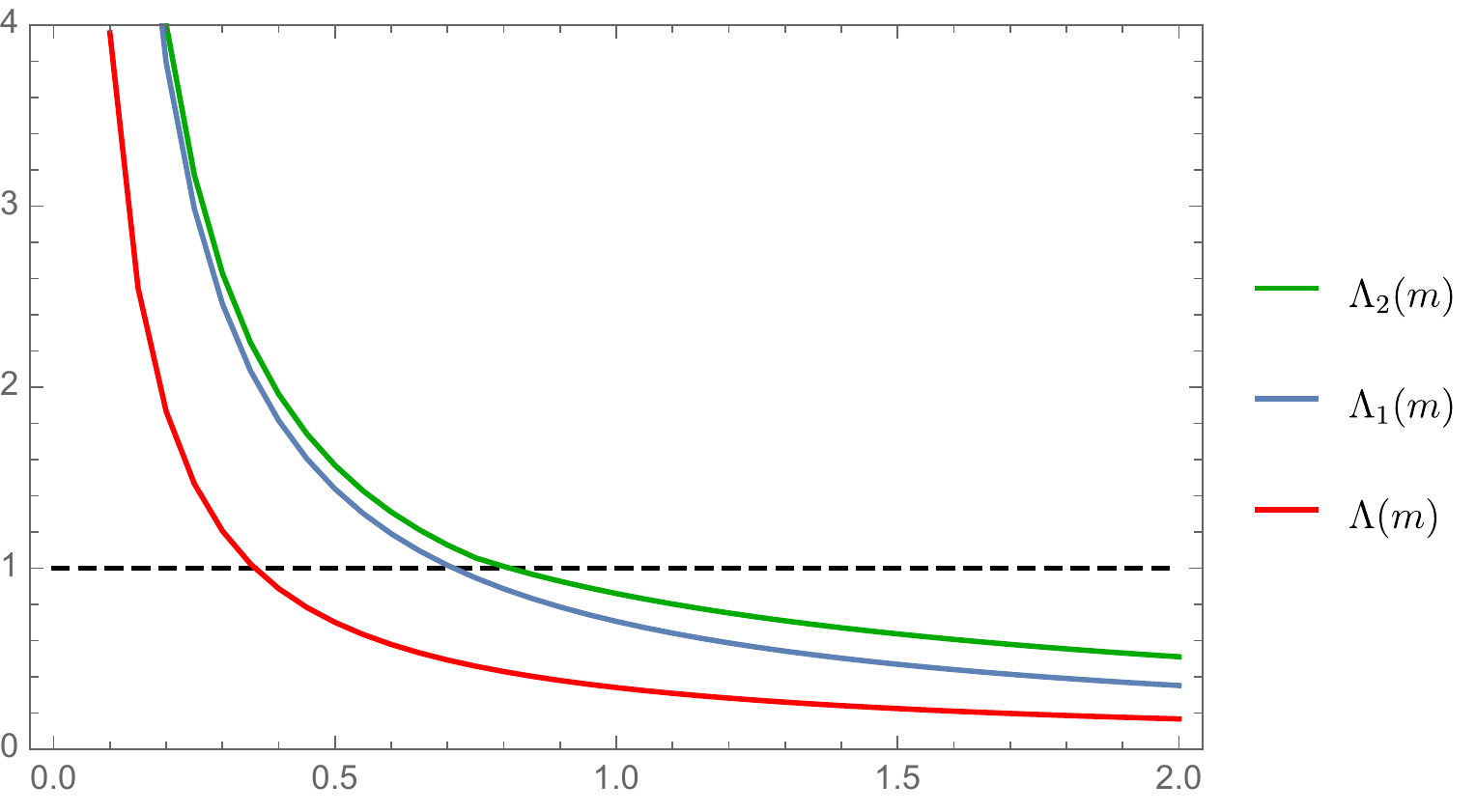}
\end{center}
\caption{Numerical evaluation of $\Lambda(m)$ defined in \eqref{def:Lm}. In the region $\Lambda(m)<1$, we prove stability of the system. Asymptotically, $\Lambda(m) \approx 1/(2 \sqrt{2} m)$ for large $m$ (and in fact, approximately within a few percent in the whole region $m\gtrsim 1$). For $\Lambda_1(m)<1$, we prove that the domain of the operator $\Gamma$ in \eqref{def:Gamma} equals $H^1_{\rm as}(\R^{3(N-1)})$. Moreover, for $\Lambda_2(m)<1$ the boundary condition in \eqref{def:DHa} implies that for every function in the domain of $H_\alpha$ one has $\xi \in H^{3/2}_{\rm as}(\R^{3(N-1)})$.
 }\label{fig:lambda1}
\end{figure}

\subsection{Hamiltonian}

For $\Lambda(m)<1$, Theorem \ref{thm1} implies  that 
\begin{equation}\label{def:Gamma}
 \Tdiag(\xi)+  \Toff(\xi)  = \langle \xi | \Gamma \xi \rangle
\end{equation}
defines a positive selfadjoint operator $\Gamma$ on  $L^2_{\rm as}(\R^{3 (N-1)})$, with domain $D(\Gamma) \subset H_{\rm as}^{1/2}(\R^{3 (N-1)})$. In fact,
\begin{equation}
\Gamma \geq \left(1-\Lambda(m)\right) L \geq \left(1-\Lambda(m)\right) 2\pi^2 \sqrt{\mu}
\end{equation}
where $L$ is short for the multiplication operator in momentum space defined by \eqref{def:L}.  


It is  not difficult to see that $H^{1}_{\rm as}(\R^{3 (N-1)})\subset D(\Gamma)$ (see Sect.~\ref{sec:prelim}), but this inclusion could possibly be strict. In fact, it was shown in \cite{Minlos2012,Minlos2014b} in the case $N=2$ that  $\Gamma$ is not selfadjoint on $H^1$ for certain small $m$, but admits a one-parameter family of semi-bounded self-adjoint extensions. In contrast, 
the following theorem implies that $D(\Gamma) = H^{1}_{\rm as}(\R^{3 (N-1)})$ for larger $m$, more precisely for $\Lambda_1(m)<1$, which is slightly more restrictive than our regime of stability, $\Lambda(m)<1$. 

To state our result, we define, analogously to \eqref{def:Lm}, for $\beta\geq 0$ and $m>0$, 
\begin{align}\nonumber &
\Lambda_\beta (m)= \sup_{s,K\in\R^3, Q>0}  \frac{s^2 + Q^2}{\pi^2 (1+m)}   \int_{\R^3}  \frac 1 {t^2} \left( \frac  {\ell_m(s,K,Q)^{(\beta-1)/2 } }{ \ell_m(t,K,Q)^{(\beta +1)/2 } }+ \frac  {\ell_m(t,K,Q)^{(\beta-1)/2 } }{ \ell_m(s,K,Q)^{(\beta +1)/2 } }   \right) \\ & \qquad\qquad \times  \frac { \left| (s+AK)\cdot (t+AK) \right| }{ \left[ (s+AK)^2 + (t+AK)^2 + \frac m{1+m}( Q^2 +A K^2) \right]^2  - \left[ \frac 2{(1+m)} (s+AK)\cdot(t+AK)\right]^2}    \dI t \label{def:Lmb}
\end{align}
Note that the integrand in \eqref{def:Lmb} is increasing and convex in $\beta$, hence 
$\Lambda_\beta(m)$ is, as a supremum over such functions, also increasing and convex. We have $\Lambda_\beta(m)\geq \Lambda_0(m) = 2  \Lambda(m)$. We shall show in Section~\ref{sec:Lambda} that $\Lambda_\beta(m)$ is finite for $\beta < 3$ and satisfies $\lim_{m\to\infty} \Lambda_\beta(m) = 0$. In particular, from the convexity it then follows that $\Lambda_\beta(m)$ is continuous in $\beta$ for $0\leq \beta < 3$. 

\begin{theorem}\label{thm:domain}
For any $\xi \in H^{1}_{\rm as}(\R^{3 (N-1)})$, $\mu>0$ and $N\geq 2$,
\begin{equation}
\| \Gamma \xi\|_{L^{2}(\R^{3 (N-1)})}^2 \geq \left(1 - \Lambda_1(m) \right) \| L \xi \|^2_{L^{2}(\R^{3 (N-1)})}
\end{equation}
In particular, if $\Lambda_1(m)<1$, then $D(\Gamma) = D(L) = H^{1}_{\rm as}(\R^{3 (N-1)})$. More generally, for $0\leq \beta \leq 2$, 
\begin{equation}\label{217}
\| L^{(\beta -1)/2} \Gamma \xi\|_{L^{2}(\R^{3 (N-1)})}^2 \geq \left(1 - \Lambda_\beta(m) \right) \| L^{(\beta+1)/2} \xi \|^2_{L^{2}(\R^{3 (N-1)})}
\end{equation}
for all $\xi \in H^{(\beta+1)/2}_{\rm as}(\R^{3 (N-1)})$.
\end{theorem}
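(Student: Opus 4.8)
\emph{Proof proposal.} I would write $\Gamma = L + T$, where $L$ denotes the multiplication operator \eqref{def:L} and $T$ is the symmetric integral operator defined by $\Toff(\xi)=\langle\xi|T\xi\rangle$, i.e.\ $(\widehat{T\xi})(s,\vec q)=(N-1)\int_{\R^3}G(s,t,\vec q)\,\hat\xi(t,\vec q)\dI t$. Since $G$ is symmetric in its first two arguments, $T$ is self-adjoint, and Theorem~\ref{thm1} reads $T\ge-\Lambda(m)L$ as forms. Working on the dense set $\xi\in H^{(\beta+1)/2}_{\rm as}(\R^{3(N-1)})$ (where all powers of $L$ below are defined), I would decompose $L^{(\beta-1)/2}\Gamma\xi=L^{(\beta+1)/2}\xi+L^{(\beta-1)/2}T\xi$, expand the square, and discard the nonnegative term $\norm{L^{(\beta-1)/2}T\xi}^2$. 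Using self-adjointness of $L$ and $T$, the cross term becomes $\langle\xi|(L^\beta T+TL^\beta)\xi\rangle$, so that \eqref{217} follows at once from the single form bound
\[
\langle\xi|(L^\beta T+TL^\beta)\xi\rangle\;\ge\;-\,\Lambda_\beta(m)\,\norm{L^{(\beta+1)/2}\xi}_{L^2(\R^{3(N-1)})}^2 .
\]
In Fourier variables the left-hand side equals $(N-1)\int\hat\xi^\ast(s,\vec q)\hat\xi(t,\vec q)G(s,t,\vec q)\bigl(L(s,\vec q)^\beta+L(t,\vec q)^\beta\bigr)\dI s\dI t\dI\vec q$, while the right-hand side is $-\Lambda_\beta(m)\int|\hat\xi(s,\vec q)|^2L(s,\vec q)^{\beta+1}\dI s\dI\vec q$.

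This is precisely the weighted analogue of \eqref{eq:thm}: the off-diagonal form now carries the symmetric weight $L(s)^\beta+L(t)^\beta$ and is to be dominated by the diagonal form with weight $L^{\beta+1}$. My plan is therefore to rerun the proof of Theorem~\ref{thm1} verbatim, changing only the weight in the Cauchy--Schwarz (Schur) step. After the same change of variables one checks that $L(s,\vec q)=2\pi^2\,\ell_m(s,K,Q)$, with $K,Q$ the natural invariants of the spectator momenta $\vec q$ and $\ell_m$ as in \eqref{def:lm}, while the rational kernel and the weight $t^{-2}$ appearing in \eqref{def:Lm} are unchanged. Applying the symmetric Schur bound with the weight $h(s)=\ell_m(s,K,Q)^{-(\beta+1)/2}$ (which reduces to the weight $\ell_m^{-1/2}$ used for Theorem~\ref{thm1} when $\beta=0$), the only effect of the extra factor $L(s)^\beta+L(t)^\beta$ is to replace the symmetric factor $\ell_m(s)^{-1/2}\ell_m(t)^{-1/2}$ of Theorem~\ref{thm1} by the $\beta$-dependent bracket $\ell_m(s)^{(\beta-1)/2}\ell_m(t)^{-(\beta+1)/2}+\ell_m(t)^{(\beta-1)/2}\ell_m(s)^{-(\beta+1)/2}$ of \eqref{def:Lmb}; taking the supremum of the resulting ratio over $s,K,Q$ then reproduces $\Lambda_\beta(m)$, and in particular recovers $\Lambda_0(m)=2\Lambda(m)$.

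The heart of the matter, and the step I expect to be the main obstacle, is making this Schur estimate rigorous: carrying out the change of variables that isolates the spectator dependence, evaluating the weighted $t$-integral, and verifying its convergence. The large-$|t|$ behaviour of the integrand is where the constraint on $\beta$ enters --- the $t$-integral in \eqref{def:Lmb} diverges for $\beta\ge3$ --- so the estimate is only available in the stated range $0\le\beta\le2$. Granting the displayed form bound, \eqref{217} is immediate. Finally, specializing to $\beta=1$ gives $\norm{\Gamma\xi}^2\ge(1-\Lambda_1(m))\norm{L\xi}^2$ on $H^1_{\rm as}(\R^{3(N-1)})$; combined with the known inclusion $H^1_{\rm as}\subseteq D(\Gamma)$ and the fact that $H^1_{\rm as}$ is a core for the self-adjoint operator $\Gamma$, a standard closed-graph argument (if $\xi_n\to\psi$ and $\Gamma\xi_n\to\Gamma\psi$, then $L\xi_n$ is Cauchy by the lower bound, so $\psi\in D(L)$) yields $D(\Gamma)=D(L)=H^1_{\rm as}(\R^{3(N-1)})$ whenever $\Lambda_1(m)<1$.
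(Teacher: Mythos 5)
Your overall strategy is the same as the paper's: write $\Gamma=L+J$ (your $T$), expand $\| L^{(\beta-1)/2}\Gamma\xi\|^2$, discard the nonnegative term $\|L^{(\beta-1)/2}J\xi\|^2$, reduce \eqref{217} to the cross-term bound $\langle\xi|(JL^\beta+L^\beta J)\xi\rangle\geq-\Lambda_\beta(m)\|L^{(\beta+1)/2}\xi\|^2$, and prove the latter by rerunning the argument of Theorem~\ref{thm1} with $\beta$-dependent weights. However, the step you describe as ``changing only the weight in the Schur step'' hides the one ingredient that does \emph{not} carry over verbatim. The constant $\Lambda_\beta(m)$ contains the kernel with denominator $\bigl[\,\cdots\bigr]^2-\bigl[\tfrac 2{1+m}(s+AK)\cdot(t+AK)\bigr]^2$, which arises from the cancellation $G(s,t,\vec q)-G(s,-t-2AK,\vec q)$, i.e.\ from the \emph{negative part} of the weighted operator $\tau^\beta$ of \eqref{def:tau}; a weighted Schur bound on the full kernel (bracket times $|G|$, no cancellation) yields a strictly larger constant than \eqref{def:Lmb}. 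Identifying $\tau^\beta_\pm$ as in \eqref{def:taupm} is the content of Lemma~\ref{lem1}, and for $\beta\neq 0$ this is not a rerun of the $\beta=0$ case: it rests on the operator positivity of Lemma~\ref{lem:nl2}, proved via the representation $x^{-\beta/2}=c_\beta\int_0^\infty r^{-\beta/2}(x+r)^{-1}\dI r$, which requires $|\beta|<2$. This also shows that your explanation of the admissible range is off: finiteness of the $t$-integral only requires $\beta<3$, so by your reasoning the estimate would hold up to $\beta<3$; the actual reason the direct argument stops below $\beta=2$ is the constraint in Lemmas~\ref{lem:nl1}--\ref{lem:nl3}, and the endpoint $\beta=2$ is then recovered by a separate limiting argument (monotone convergence together with convexity, hence continuity, of $\beta\mapsto\Lambda_\beta(m)$) which your proposal lacks.

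A second, independent gap is the final conclusion $D(\Gamma)=D(L)$. You invoke ``the fact that $H^1_{\rm as}$ is a core for $\Gamma$,'' but this is not an available fact: a form core contained in the operator domain need not be an operator core, and, given the inequality, the core property is \emph{equivalent} to the conclusion $D(\Gamma)=H^1_{\rm as}$ (your own closed-graph argument shows core $\Rightarrow$ equality, and equality trivially gives the core property), so assuming it begs the question. A non-circular route: the $\beta=1$ bound gives $\|(L+tJ)\xi\|^2\geq(1-t\Lambda_1(m))\|L\xi\|^2\geq(1-\Lambda_1(m))\|L\xi\|^2$ uniformly for $t\in[0,1]$, and $J$ is $L$-bounded by the remark following Lemma~\ref{lem:nl1}; by the method of continuity (Kato--Rellich applied in steps of size $\delta\sim\sqrt{1-\Lambda_1(m)}$ in $t$), $L+J$ is self-adjoint on $D(L)$, and since the self-adjoint operator $\Gamma$ extends it, maximality of self-adjoint operators gives $\Gamma=(L+J)\!\restriction_{D(L)}$, hence $D(\Gamma)=D(L)$. (The paper leaves this functional-analytic step implicit as well, but your version as written is not a proof of it.)
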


The proof of Theorem~\ref{thm:domain} will be given in Section~\ref{sec:domain}.  A numerical evaluation of $\Lambda_\beta(m)$ yields $\Lambda_1(m)< 1$ for $m\geq 0.72$, while $\Lambda_{2}(m)<1$ for $m\geq 0.82$ (see Fig.~\ref{fig:lambda1}).

In terms of $D(\Gamma)$,  the self-adjoint operator $H_\alpha$ defined by the quadratic form $F_\alpha$ in \eqref{def:Fal} can be constructed in a straightforward way following the analogous construction in the two-dimensional case in \cite[Sect.~5]{Dell} (see also \cite{Teta,Finco2012,correggi2012stability,Minlos2012,Minlos2014b}). The result is 
 \begin{equation}\label{def:DHa}
D(H_\alpha) = \left\{ u \in L_{\rm as}^2(\R^{3 N}) \mid u = w + G \xi , w \in H_{\rm as}^2(\R^{3 N}), \xi \in D(\Gamma) , w\!\restriction_{y_N=0} = (2\pi)^{-3/2} (-1)^{N+1} (\alpha + \Gamma) \xi \right\} 
\end{equation}
and $H_\alpha$ acts on $u\in D(H_\alpha)$ as 
\begin{equation}
\left( H_\alpha + \mu \right) u =  \left(   - \sum_{i=1}^N \Delta_{y_i} - \frac 2{m+1} \sum_{1\leq i < j\leq N} \nabla_{y_i} \cdot \nabla_{y_j}   +\mu\right)  w 
\end{equation}

Note that as an $H^2$-function, $w$ has an $L^2$-restriction to the hyperplane $y_N=0$, and the last identity in \eqref{def:DHa} has to be understood as an identity of functions in $L^2_{\rm as}(\R^{3 (N-1)})$. In fact, the restriction of the $H^2$-function $w$ to the hyperplane $y_N=0$ is an $H^{1/2}$ function, and hence we conclude that for any $u \in D(H_\alpha)$, the corresponding $\xi$ satisfies $\Gamma\xi \in H^{1/2}$. The last part of Theorem~\ref{thm:domain} thus implies that for $\Lambda_{2}(m)<1$, $\xi$ is necessarily in $H^{3/2}$.

The last identity in \eqref{def:DHa} encodes the boundary condition satisfied by functions $u\in D(H_\alpha)$ at the origin. 
To see this, consider the behavior of the function $G\xi$ as $y_N\to 0$ or, equivalently, the integral of \eqref{eq:quadraticForm} over $q_N$ in a large ball. A short calculation using \eqref{eq:quadraticForm} shows that 
\begin{align}\nonumber
& \lim_{K\to\infty}  \int_{|q_N|<K} \left( \widehat{G \xi} (q_1,\ldots, q_N) - \frac 1{q_N^2} (-1)^{N+1}  \hat \xi(q_1, \ldots q_{N-1})\right) \dI q_N  
 \\ \nonumber & = \int_{\R^3} \left( G(q_1, \ldots, q_N) \sum_{i=1}^{N-1} (-1)^{i+1}  \hat \xi(q_1, \ldots ,q_{i-1}, q_{i+1}, \ldots, q_N) \right) \dI q_N  
 \\ \nonumber & \quad + (-1)^{N+1}  \hat \xi(q_1, \ldots q_{N-1})  \lim_{K\to\infty}  \int_{|q_N|<K}  \left( G (q_1,\ldots, q_N) - \frac 1{q_N^2} \right) \dI q_N \\ & = (-1)^N \widehat{  \Gamma\xi}(q_1,\ldots,q_{N-1})
\end{align}
where  we have used that
\begin{equation}
L(q_1, \dots , q_{N-1}) = - \lim_{K\to\infty} \int_{|q_N|<K} \left( G (q_1,\ldots, q_N) - \frac 1{q_N^2} \right) \dI q_N
\end{equation}
We conclude that the boundary condition in \eqref{def:DHa} implies that any $u\in D(H_\alpha)$ has the asymptotic behavior
\begin{equation}
\int_{|q_N|<K} \hat u(q_1,\ldots,q_N) \dI q_N  \approx \left(  4 \pi K   + \alpha\right)  {(-1)^{N+1} } \hat\xi(q_1,\ldots,q_{N-1}) \quad \text{as $ K \to \infty$.}
\end{equation}
In particular, $u$ diverges as $2\pi^2/|y_N|  + \alpha$ as $|y_N|\to 0$, and hence $\alpha$ is to be interpreted as  $\alpha = -2\pi^2 /a$ with $a$ the scattering length of the point interaction. A precise formulation of this divergence in configuration space will be given in Proposition~\ref{prop1} in the next subsection.

As in the case of the corresponding quadratic form, $H_\alpha$ is independent of the parameter $\mu$ used in its construction. Under a unitary scaling of the form $U_\lambda \psi(\, \cdot\,) = \lambda^{3(N+1)/2} \psi(\lambda\,\cdot\,)$, it transforms as $U_\lambda^{-1} H_\alpha U_\lambda = \lambda^2 H_{\lambda^{-1} \alpha}$.  Note that in contrast to $D(F_\alpha)$, the domain $D(H_\alpha)$ {\em does} depend on $\alpha$.

\subsection{Tan Relations}

In \cite{Tan}, Tan derived a number of identities that should hold for any  system of particles with point interactions (see also the review \cite{TanBraaten} and the references there). These can be experimentally tested, see \cite{hulet,vale1,vale2,jin,salomon}.  In this section, we shall present a rigorous version of the Tan relations for the Hamiltonian $H_\alpha$ constructed in the last subsection. The analysis in this section does not actually use the self-adjointness and analogous results  also hold for the general $N+M$ system, irrespective of its stability and the self-adjointness of the corresponding $H_\alpha$. We shall work with the assumption $\xi \in H^1$, however, which is guaranteed to be the case for $\Lambda_1(m)<1$, by Theorem~\ref{thm:domain}. 

In order to state the results, we have to re-introduce the center-of-mass motion. The Hilbert space for the $N+1$ system is thus  $L^2(\R^3) \otimes L^2_{\rm as}(\R^{3N})$, and the form domain of the corresponding quadratic form, which we denote by $\calF_\alpha$, equals
\begin{equation}
D(\calF_\alpha) = \left\{  \psi = \phi + \calG \xi \mid \phi \in H^1(\R^3)\otimes H_{\rm as}^1(\R^{3 N}), \xi \in H^{1/2}(\R^3)\otimes H_{\rm as}^{1/2}(\R^{3 (N-1)})\right\} 
\end{equation}
where
\begin{equation}\label{def:calG}
\calG(k_0,k_1, \ldots, k_N) \coloneqq \left ( \frac 1{2m} k_0^2 + \frac 12 \sum_{i=1}^N  k_i^2  + \mu \right)^{-1} \,,
\end{equation}
 $\calG\xi$ is short for the function with Fourier transform  
\begin{equation}
\widehat{\calG \xi} (k_0,k_1,\ldots, k_N) =  \calG(k_0, k_1, \ldots, k_N) \sum_{i=1}^N (-1)^{i+1}  \hat \xi(k_0 + k_i , k_1, \ldots ,k_{i-1}, k_{i+1}, \ldots, k_N)
\label{eq:tquadraticForm} 
\end{equation}
and, compared to \eqref{def:DF},  we have absorbed a factor $\frac{m+1}{2m}$ into the definition of $\xi$ for simplicity. 
For $\psi\in D(\calF_\alpha)$, we have
\begin{align}\nonumber
\calF_\alpha(\psi) & = \left\langle \phi \left| - \frac 1 {2 m} \Delta_{x_0} - \frac 1 2 \sum_{i=1}^N \Delta_{x_i}   + \mu \right| \phi\right\rangle  - \mu \norm{\psi}^2_{L^2(\R^{3(N+1)})} \\ & \quad + N  \left( \frac {2m }{m+1} \alpha \norm{\xi}_{L^2(\R^{3N})}^2 +   \calT_{\rm diag}(\xi)+  \calT_{\rm off}(\xi) \right)  \label{def:cFal}
\end{align}
where
\begin{align}\nonumber
\calT_{\rm diag}(\xi) & \coloneqq \int_{\R^{3N}}  |\hat \xi(k_0, k_1,\vec k)|^2  \calL(k_0, k_1, \vec k) \dI k_0 \dI k_1 \dI \vec k  \\
\calT_{\rm off}(\xi) & \coloneqq (N-1) \int_{\R^{3(N+1)}} \hat \xi^\ast (k_0+ s , t, \vec k) \hat \xi (k_0 + t , s, \vec k) \calG(k_0,s,t,\vec k)  \dI k_0 \dI s \dI t \dI \vec k \label{def:calT}
\end{align}
and we used    $\vec k = (k_2,\dots,k_{N-1})$ for short. The function $\calL$ is given by 
\begin{equation}\label{def:calL}
\calL(k_0, k_1, \ldots, k_{N-1}) \coloneqq 2 \pi^2 \left(\frac{2m}{m+1}\right)^{3/2}  \lk ( \frac {k_0^2}{2(m+1)} + \frac 12 \sum_{i=1}^{N-1} k_i^2 + \mu \rk)^{1/2}
\end{equation}

Theorem~\ref{thm1} implies that 
\begin{equation}\label{eq:bcal}
\calT_{\rm off}(\xi) \geq -\Lambda(m) \calT_{\rm diag}(\xi) \qquad \text{for all $\xi\in H^{1/2}(\R^3)\otimes H_{\rm as}^{1/2}(\R^{3 (N-1)})$.}
\end{equation}
To see this, one can either mimic the proof of Theorem~\ref{thm1}, or one simply argues as follows. Displaying the dependence on $\mu$ explicitly via a superscript in the expressions for $T_{\rm diag/off}$ and $\mathcal{T}_{\rm diag/off}$ in \eqref{def:T} and \eqref{def:calT}, respectively, it is straightforward to check that 
\begin{equation}
\mathcal{T}^\mu_{\rm diag/off}(\xi) = \frac {2m}{m+1} \int_{\R^3} T^{\tilde\mu_P} _{\rm diag/off}(\eta_P)\,  \dI P 
\end{equation}
where $\tilde \mu_P = \frac {2m}{m+1} ( \mu + \frac{P^2}{2(m+N)} )$ and 
\begin{equation}
\hat\eta_P(q_1,\dots,q_{N-1}) = \hat\xi \left( \tfrac{m+1}{m+N} P - \sum\nolimits_{j=1}^{N-1} q_j , q_1 + \tfrac 1{m+N} P , \dots, q_{N-1} + \tfrac 1{m+N} P \right)
\end{equation}
which is in $H^{1/2}_{\rm as}(\R^{3(N-1)})$ for almost every $P\in \R^3$. Since the bound \eqref{eq:thm} is uniform in $\mu$, \eqref{eq:bcal} follows.

Analogously to the discussion in the previous subsection, for $\Lambda(m)<1$ the quadratic form  $\calT_{\rm diag}(\xi)+  \calT_{\rm off}(\xi)$ defines a positive self-adjoint operator
$\calGam$ on $L^2(\R^3)\otimes L^2_{\rm as}(\R^{3 (N-1)})$. Explicitly, $\tilde\Gamma$ acts as
\begin{align}\nonumber
&\widehat{\tilde\Gamma \xi}(k_0,k_1,\ldots,k_{N-1}) \\ \nonumber & = \calL(k_0, k_1, \ldots, k_{N-1}) \hat \xi(k_0,k_1,\ldots,k_{N-1}) \\ &  \quad + \sum_{j=1}^{N-1} (-1)^{j+1} \int_{\R^3}  \calG(k_0-s,s,k_1,\ldots, k_{N-1}) \hat \xi(k_0  + k_j -s,s, k_1,\ldots,k_{j-1},k_{j+1},\ldots,k_{N-1}) \dI s
\end{align}
Theorem~\ref{thm:domain} implies that the domain $D(\tilde\Gamma)$ equals $H^{1}(\R^3)\otimes H_{\rm as}^{1}(\R^{3 (N-1)})$ in the case $\Lambda_1(m)<1$. 
 The domain of the self-adjoint operator $\calH_\alpha$ corresponding to the quadratic form $\calF_\alpha$ is given by those $\psi\in D(\calF_\alpha)$ where $\phi \in H^2(\R^3)\otimes H_{\rm as}^2(\R^{3 N})$, $\xi \in D(\calGam)$ and the boundary condition 
\begin{equation}\label{bc}
\phi\restriction_{x_N=x_0} = \frac{(-1)^{N+1}}{ (2\pi)^{3/2}} \left( \frac{2m\alpha}{m+1}+\calGam\right)\xi
\end{equation}
 is satisfied. The Hamiltonian $\calH_\alpha$ acts as
 \begin{equation}
\left( \calH_\alpha +\mu \right) \psi =  \left( - \frac 1 {2 m} \Delta_{x_0} - \frac 1 2 \sum_{i=1}^N \Delta_{x_i}   + \mu \right) \phi
\end{equation}
 It commutes with translations and rotations, and transforms  under scaling in  the same way as discussed for $H_\alpha$ at the end of the previous subsection.

The connection between the boundary condition \eqref{bc} and the asymptotic behavior of $\psi\in D(\calH_\alpha)$ as $|x_N-x_0|\to 0$ is explored in the following proposition, whose proof will be given in Section~\ref{sec:propp}.

\begin{prop}\label{prop1}
For any $\psi\in D(\calH_\alpha)$ with $\xi \in H^1(\R^{3N})$, we have
\begin{align}\nonumber 
\psi\left( R+ \tfrac r{1+m} ,x_1,\ldots, x_{N-1}, R - \tfrac{ mr}{1+m} \right) & = \left( \frac {2\pi^2} {|r|} + \alpha\right) \frac{2m}{m+1}\frac{(-1)^{N+1}}{(2\pi)^{3/2}} \xi(R,x_1,\ldots x_{N-1}) \\ & \quad + \upsilon(R,x_1,\ldots,x_{N-1},r) \label{eq:prop}
\end{align}
with $\upsilon(\,\cdot\, , r) \in L^{2}(\R^{3N})$ for all $r\in\R^3$, and $\lim_{r\to 0} \|\upsilon(\,\cdot\, , r)\|_{L^{2}(\R^{3N})} = 0$. 
\end{prop}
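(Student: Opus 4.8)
The plan is to extract the claimed $2\pi^2/|r|$ singularity directly from the Fourier-space definition of $\calG\xi$ in \eqref{eq:tquadraticForm}, and to show that everything else is $L^2$ with vanishing norm as $r\to 0$. The decomposition $\psi = \phi + \calG\xi$ with $\phi \in H^2$ suggests treating the two pieces separately. Since $\phi$ is an $H^2$ function, its evaluation along the diagonal $x_N = x_0 + r$ converges in $L^2(\R^{3N})$ as $r\to 0$ to the trace $\phi\!\restriction_{x_N=x_0}$, which by the boundary condition \eqref{bc} equals $(2\pi)^{-3/2}(-1)^{N+1}\tfrac{2m}{m+1}(\alpha+\calGam)\xi$. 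Hence $\phi$ contributes the $\alpha$-term plus a $\calGam\xi$-term, and the remainder $\phi(\,\cdot\,,r)-\phi\!\restriction_{x_N=x_0}$ is an $L^2$ function whose norm tends to $0$; the latter is a standard consequence of continuity of translations composed with the $H^2\to H^{1/2}$ trace theorem. The real work is to show that $\calG\xi$ reproduces the singular part $\tfrac{2\pi^2}{|r|}\tfrac{2m}{m+1}(2\pi)^{-3/2}(-1)^{N+1}\xi$ together with a compensating $-\calGam\xi$ term and an $L^2$-vanishing remainder.

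For the $\calG\xi$ piece I would substitute the appropriate relative coordinate. Writing $x_N = R - \tfrac{mr}{1+m}$ and $x_0 = R + \tfrac{r}{1+m}$, the relative variable conjugate to $r$ is the momentum appearing in $\calG$ through the combination $k_0^2/(2m)+k_N^2/2$; changing variables to the total and relative momenta of the pair $(0,N)$ diagonalizes this quadratic form so that the $r$-dependence sits entirely in a one-dimensional-looking Gaussian-type integral $\int e^{i p\cdot r}\,\calG\,\dI p$ over the relative momentum $p$. The key mechanism is the elementary identity that the inverse Fourier transform of $(p^2+\kappa^2)^{-1}$ in $\R^3$ equals $(2\pi)^{-3/2}\cdot 2\pi^2 e^{-\kappa|r|}/|r|$, which produces exactly the $2\pi^2/|r|$ pole with the correct prefactor after absorbing the mass factor $\tfrac{2m}{m+1}$. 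Subtracting the leading $1/p^2$ behavior — precisely the subtraction already performed in the excerpt when identifying $\calL$ as the limit of $\int(\calG - 1/q_N^2)$ — isolates the singular term; the subtracted integrand is integrable and yields, in the $r\to 0$ limit, the $-\calL\,\xi$ contribution that combines with the off-diagonal sum to reconstruct $-\calGam\xi$. This $-\calGam\xi$ cancels the $+\calGam\xi$ coming from the $\phi$-trace, leaving only $(\tfrac{2\pi^2}{|r|}+\alpha)\tfrac{2m}{m+1}(2\pi)^{-3/2}(-1)^{N+1}\xi$ as the explicit term in \eqref{eq:prop}, exactly as claimed.

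I would control the error term $\upsilon$ by splitting $\widehat{\calG\xi}$ into the diagonal term $i=N$ and the off-diagonal terms $1\le i\le N-1$. For the off-diagonal terms, $\calG$ restricted to the $(0,N)$ relative variable is integrable in $p$ uniformly (the singularity in $p$ is only $|p|^{-2}$, integrable in $\R^3$), so the dominated convergence theorem, together with $\xi\in H^1$, gives an $L^2(\R^{3N})$ limit as $r\to 0$ and hence an $L^2$-vanishing remainder after subtracting that limit; the hypothesis $\xi\in H^1$ is what guarantees the relevant integrals define $L^2$ functions of the remaining variables. The diagonal term is where the singularity lives, and there I would write $\calG = 1/p^2 + (\calG - 1/p^2)$: the first summand gives the pole via the Yukawa/Coulomb Fourier identity above, and the second is continuous at $r=0$ with value $-\calL\,\xi$. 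The main obstacle is making the remainder estimate uniform and genuinely $L^2$ in the full set of spectator variables $(R,x_1,\dots,x_{N-1})$ rather than pointwise in $r$ only: one must commute the $r\to 0$ limit with the $L^2$-norm in the spectator variables, which requires a dominated-convergence bound with an $L^2$-integrable majorant built from $\ell_m$ and the $H^1$-norm of $\xi$. This is precisely the place where the $H^1$ regularity of $\xi$ (equivalently $\Lambda_1(m)<1$ via Theorem~\ref{thm:domain}) is essential, since at only $H^{1/2}$ regularity the subtracted integrand need not be square-integrable in the spectator momenta.
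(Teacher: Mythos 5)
Your proposal follows essentially the same route as the paper's proof: split $\psi=\phi+\calG\xi$, obtain the $\alpha$- and $\calGam\xi$-terms from the $H^2$-trace of $\phi$ together with the boundary condition \eqref{bc}, extract the $2\pi^2/|r|$ pole and the compensating $-\calL\xi$ from the diagonal term of $\calG\xi$ via the Yukawa Fourier transform, and control the off-diagonal remainder by dominated convergence with an $L^2$ majorant whose existence is exactly where $\xi\in H^1$ enters (the paper gets it by duality from Lemma~\ref{lem:nl1} with $\beta=1$); the paper merely organizes all of this through a partial Fourier transform $\eta$ and three error terms $\kappa_1,\kappa_2,\kappa_3$, which correspond one-to-one to your $\phi$-remainder, off-diagonal remainder, and diagonal remainder. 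Two slips to correct when writing it up: the boundary condition \eqref{bc} puts the factor $\tfrac{2m}{m+1}$ only on $\alpha$, not on $\calGam$ (your stated cancellation of the $\calGam\xi$ terms is only consistent with the correct version); and $\calG$ is \emph{not} integrable in the relative momentum --- the $|q|^{-2}$ decay fails at infinity, not at the origin --- so integrability of the off-diagonal $q$-integral and its $L^2$ dependence on the spectator variables both come from pairing $\calG$ against the decay of $\hat\xi$, which is precisely what the Schur-test argument of Lemma~\ref{lem:nl1} supplies and cannot be had ``uniformly'' from $\calG$ alone.
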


Proposition~\ref{prop1}  immediately implies a two-term asymptotics for the two-particle density
\begin{equation}
\rho(r) = N \int_{\R^{3N}} \left| \psi\left( R+ \tfrac r{1+m} ,x_1,\ldots, x_{N-1}, R - \tfrac{ mr}{1+m} \right) \right|^2 \dI R \dI x_1 \cdots \dI x_{N-1} 
\end{equation}
as $r\to 0$. In fact, $\rho$ satisfies  
\begin{equation}\label{def:rg}
\rho( r)=  \frac \pi 2 \left( \frac 1{|r|^2} - \frac 2{|r|a}  \right)  \con  + g(r) \quad \text{with}\quad \lim_{r \to 0} |r g(r) | = 0
\end{equation}
where $a = -2\pi^2/\alpha$ denotes the scattering length and 
\begin{equation}\label{def:con}
\con = \left( \frac{2m}{m+1}\right)^2 N \|\xi\|^2_{L^2(\R^{3N})}
\end{equation}
In the physics literature, $\con$ is called the {\em contact} \cite{Tan}. It turns out to play a crucial role in various other relevant quantities, as we shall demonstrate now.

For general $\psi \in L^2(\R^3) \otimes L^2_{\rm as}(\R^{3N})$,  the momentum densities of the mass $m$ (spin up) particle $n_\uparrow(k)$ and  of the mass $1$ (spin-down) particles $n_{\downarrow}(k)$ are defined as 
\begin{equation}
n_\uparrow(k) = \int_{\R^{3N}} | \hat\psi(k , k_1,\dots, k_N)|^2 \dI k_1 \cdots \dI k_N \ , \ n_\downarrow(k) = N \int_{\R^{3N}} |\hat\psi(k_0,k,k_2,\dots,k_N)|^2 \dI k_0 \dI k_2 \cdots \dI k_N 
\end{equation}
Our rigorous formulation of the Tan relation for the energy is as follows.

\begin{theorem}\label{thm:tan}
For $\psi \in D(\calH_\alpha)$ with $\xi \in H^1(\R^{3N})$, let $\con$ be given in \eqref{def:con}, 
and 
let 
\begin{equation}\label{defbark}
p_\uparrow = \frac{2m}{m+1} \| \xi\|_{L^2(\R^{3N})}^{-2} \int_{\R^{3N}}  k_1 |\hat\xi(k_1,\ldots,k_N)|^2 \dI k_1 \cdots \dI k_N \quad , \quad p_\downarrow= \frac 1m p_\uparrow.
\end{equation}
Then 
\begin{equation}\label{l1n}
k\mapsto k^2  n_\uparrow(k) - \frac \con  {|k-p_\uparrow|^{2}}  \in L^1(\R^3) \quad \text{and} \quad  k\mapsto k^2  n_\downarrow(k) - \frac \con  {|k-p_\downarrow|^{2}} \in L^1(\R^3)
\end{equation}
and we have the identity
\begin{equation}\label{eq:tan}
\langle \psi | \calH_\alpha \psi \rangle = \int_{\R^3} \left[ \frac {1}{2m} \left( k^2  n_\uparrow(k) - \frac { \con}{ |k-p_\uparrow|^2}\right)   + \frac {1} 2 \left(   k^2 n_\downarrow(k) - \frac {\con}{|k-p_\downarrow|^2} \right) \right] \dI k - \frac{m+1}{2m} \con \alpha
\end{equation}
\end{theorem}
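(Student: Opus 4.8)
The plan is to use the decomposition $\psi = \phi + \calG\xi$ with $\phi\in H^2(\R^3)\otimes H^2_{\rm as}(\R^{3N})$ and $\xi\in H^1$ that underlies the definition of $D(\calH_\alpha)$, and to reduce both assertions to an analysis of the large-momentum behaviour of $\widehat{\calG\xi}$. Writing $\hat\psi = \hat\phi + \widehat{\calG\xi}$ and expanding $|\hat\psi|^2$, the momentum densities split into a $\phi$--$\phi$ part, a cross part, and a $\calG\xi$--$\calG\xi$ part. Since $\phi\in H^2$, the weighted $\phi$--$\phi$ contribution satisfies $\int k^2 n^{\phi\phi}_{\uparrow/\downarrow}\,dk<\infty$, and the cross part is controlled by Cauchy--Schwarz; both lie in $L^1$ after multiplication by $k^2$. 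For the $\calG\xi$--$\calG\xi$ part I would insert \eqref{eq:tquadraticForm} and expand $\calG$ in inverse powers of the large momentum. The diagonal terms produce the leading tail: using $\calG(k_0,\ldots)\approx \tfrac{2m}{(1+m)k_0^2}$ when $k_0\to\infty$ (so that one fermion momentum balances $-k_0$) one obtains $n_\uparrow(k)\approx\con/k^4$ with $\con$ as in \eqref{def:con}, and the next order, finite because $\xi\in H^1$, reproduces exactly the shift by $p_\uparrow$ in \eqref{defbark}; the analogous computation for $n_\downarrow$ forces the relation $p_\downarrow = p_\uparrow/m$. The off-diagonal terms require two different fermion momenta to be simultaneously large, which is impossible in the tail, so they are supported at bounded $|k|$ and contribute an $L^1$ remainder. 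This yields the first assertion \eqref{l1n}.

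For the identity \eqref{eq:tan} I would start from $\langle\psi|\calH_\alpha\psi\rangle = \calF_\alpha(\psi)$ as given in \eqref{def:cFal} and compare it with the regularised kinetic energy, introducing a momentum cut-off $|k|<\Lambda$ on $k_0$ in the spin-up and on $k_1$ in the spin-down density and letting $\Lambda\to\infty$ at the end. The total cut-off kinetic energy equals $\frac1{2m}\int_{|k|<\Lambda}k^2 n_\uparrow + \frac12\int_{|k|<\Lambda}k^2 n_\downarrow$, which upon inserting $\hat\psi = \hat\phi+\widehat{\calG\xi}$ becomes $\langle\phi|(-\tfrac1{2m}\Delta_{x_0}-\tfrac12\sum_{i=1}^N\Delta_{x_i})|\phi\rangle$ plus a cross term plus the kinetic energy of $\calG\xi$. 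Using $(\calH_\alpha+\mu)\psi = \calG^{-1}\phi$ together with the boundary condition \eqref{bc}, the cross term reduces to $N\langle\xi|(\tfrac{2m\alpha}{m+1}+\calGam)\xi\rangle = \tfrac{m+1}{2m}\con\alpha + N(\calT_{\rm diag}+\calT_{\rm off})$, where the first equality uses $\tfrac{m+1}{2m}\con\alpha = N\tfrac{2m\alpha}{m+1}\|\xi\|^2$. The kinetic energy of $\calG\xi$ itself decomposes into a divergent diagonal part and a finite off-diagonal part; the latter equals $-N\calT_{\rm off}$ (the minus sign arising from the alternating signs $(-1)^{i+1}$ in \eqref{eq:tquadraticForm}). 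The key point is that the divergence of the diagonal part is cancelled exactly by the subtracted Tan tails $\con\int_{|k|<\Lambda}|k-p_{\uparrow/\downarrow}|^{-2}\,dk$, whose finite part vanishes because $\int_{|k|<\Lambda}|k-p|^{-2}\,dk = 4\pi\Lambda + o(1)$, leaving precisely the finite part $-N\calT_{\rm diag}$ dictated by the definition of $\calL$ in \eqref{def:calL}. Collecting terms, the $\mu$-dependent pieces cancel through $\|\psi\|^2 = \|\phi\|^2 + 2\Re\langle\phi|\calG\xi\rangle + \|\calG\xi\|^2$, and what remains is exactly $\langle\psi|\calH_\alpha\psi\rangle + \tfrac{m+1}{2m}\con\alpha$, i.e. \eqref{eq:tan}.

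The hard part will be the finite-part bookkeeping in this last step. The subtlety is that $\calL$, hence $\calT_{\rm diag}$, is defined by integrating out the relative momentum of the interacting pair symmetrically, whereas the Tan subtraction uses separate cut-offs on the individual momenta $k_0$ and $k_1$; one must verify that these two regularisation schemes produce the same finite part, which is where the precise form of $\calL$ in \eqref{def:calL} and the choice $p_\downarrow=p_\uparrow/m$ enter decisively. In addition, one has to justify the interchange of the $\Lambda\to\infty$ limit with the various integrations and with the splitting into diagonal, off-diagonal and cross contributions, for which the quantitative control $\xi\in H^1$ (guaranteed by Theorem~\ref{thm:domain} when $\Lambda_1(m)<1$) and the decay of $\widehat{\calG\xi}$ are essential. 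Everything away from this diagonal divergence is absolutely convergent, so once the cancellation of divergences and the matching of finite parts are established, the identity follows.
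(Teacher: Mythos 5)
Your treatment of the $L^1$ statement \eqref{l1n} follows the same route as the paper: split $\psi=\phi+\calG\xi$, control the $\phi$--$\phi$ and cross parts using $\phi\in H^2$, read off the Tan tail $\con/|k|^4$ from the diagonal $\calG\xi$--$\calG\xi$ terms, and cancel the next-order dipole term by choosing $p_\uparrow$ (resp.\ $p_\downarrow=p_\uparrow/m$) as the first moment of $|\hat\xi|^2$. However, your justification for discarding the off-diagonal $\calG\xi$--$\calG\xi$ terms is false as stated: the term
\begin{equation*}
k^2 N(N-1)\int_{\R^{3N}} \calG(k,k_1,k_2,\vec k)^2\, \hat\xi^*(k+k_1,k_2,\vec k)\,\hat\xi(k+k_2,k_1,\vec k)\dI k_1 \dI k_2 \dI\vec k
\end{equation*}
is \emph{not} ``supported at bounded $|k|$'', since $\hat\xi$ has no compact support; the heuristic that two fermion momenta cannot both be large only suggests smallness in the tail, and how small depends quantitatively on the decay of $\hat\xi$, which is merely an $H^1$ function. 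The paper's proof of integrability is not a support argument: it bounds $k^2\calG^2\le 4m/(k_1^2+k_2^2)$, integrates over $k$ \emph{first}, and then uses Cauchy--Schwarz together with the boundedness (with norm $2\pi^2$) of the operator with kernel $|k_1|^{-1/2}|k_2|^{-1/2}(k_1^2+k_2^2)^{-1}$ -- this is exactly where $\xi\in H^{1/2}$ enters, and some such quantitative ingredient is missing from your argument.

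The more serious gap is in the identity \eqref{eq:tan}. Your cutoff scheme reduces the theorem to the claim that two regularizations share the same finite part -- sharp cutoffs on the individual momenta $k_0$ resp.\ $k_1$ on one side, and the symmetric subtraction built into the definition \eqref{def:calL} of $\calL$ on the other -- and you explicitly leave this matching open (``the hard part will be the finite-part bookkeeping''). But that matching \emph{is} the substance of the theorem; without it nothing is proved. The paper's proof avoids cross-scheme matching entirely: having shown that every term in the decomposition of $k^2 n_{\uparrow/\downarrow}(k)-\con|k-p_{\uparrow/\downarrow}|^{-2}$ is absolutely integrable (after replacing the diagonal terms by their symmetrizations under $k\to -k$, cf.\ \eqref{symfor}), it integrates term by term with no cutoff. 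The $p$-dependence then drops out via $\int_{\R^3}\bigl(2|k|^{-2}-|k-p|^{-2}-|k+p|^{-2}\bigr)\dI k=0$, and a closed-form computation shows that the symmetrized remainders $R_\uparrow$, $R_\downarrow$ together with $\tfrac12\sum_{j\geq 2}k_j^2\,\calG^2$ integrate in $k$ to exactly $-\calL$, which is what produces the $-N\calT_{\rm diag}$ you want. This symmetrization-plus-exact-integral step is the missing idea that replaces your unproven finite-part matching. Note also a bookkeeping slip in your cross term: the boundary condition \eqref{bc} enters through $2\Re\langle\phi|\cdot\rangle$ and yields $2N\bigl(\calT_{\rm diag}+\calT_{\rm off}+\tfrac{2m}{m+1}\alpha\|\xi\|^2_{L^2}\bigr)$; with your single $N\langle\xi|(\tfrac{2m\alpha}{m+1}+\calGam)\xi\rangle$ the final tally cannot reproduce the $N(\calT_{\rm diag}+\calT_{\rm off})$ present in $\calF_\alpha(\psi)$ in \eqref{def:cFal}.
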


Since $\con$, $p_\uparrow$ and $p_\downarrow$ are  uniquely determined by the momentum densities via \eqref{l1n}, Eq.~\eqref{eq:tan} expresses the energy solely in terms of the momentum densities. The set of possible momentum densities arising from wave functions $\psi\in D(\calH_\alpha)$ is not known, however, and can be expected to depend in a complicated way on both $\alpha$ and $N$.

The contact $\con$ thus determines the asymptotic behavior of both $n_\uparrow(k)$ and $n_\downarrow(k)$, via $n_\uparrow(k) \approx n_\downarrow(k) \approx \con |k|^{-4}$ for large $|k|$. In fact, up to terms decaying faster than $|k|^{-5}$, we have for large $|k|$ 
\begin{equation}
n_\uparrow(k) + n_\downarrow(k) \approx \frac\con{|k|^2 |k-p_\uparrow|^2} +  \frac\con{|k|^2 |k-p_\downarrow|^2} \approx \frac \con {|k - P|^4}
\end{equation}
for $P= \frac 12(p_\uparrow + p_\downarrow) = \|\xi\|_{L^2(\R^{3N})}^{-2} \int_{\R^{3N}} k_1 |\hat \xi(k_1,\dots,k_N)|^2 \dI k_1 \dots \dI k_N$. Note also that due to the fact that  $\lim_{K\to\infty} \int_{|k|<K} (  |k|^{-2} - |k-p|^{-2} ) \dI k = 0 $ for any $p\in \R^3$, one can rewrite the identity \eqref{eq:tan} as 
 \begin{equation}
\langle \psi | \calH_\alpha \psi \rangle = \lim_{K\to\infty } \int_{|k|<K} \left[ \frac {k^2}{2m} \left(  n_\uparrow(k) - \frac { \con}{ |k|^4}\right)   + \frac {k^2} 2 \left(   n_\downarrow(k) - \frac {\con}{|k|^4} \right) \right] \dI k - \frac{m+1}{2m} \con \alpha
\end{equation}

For any stationary state, the contact $\con$  can be computed as the derivative of the energy with respect to $\alpha$, by the Feynman-Hellmann principle. In fact,  for {\em fixed} $\psi$ (and hence fixed $\xi$),  
\begin{equation}\label{fhp}
\frac\partial{\partial \alpha} \calF_\alpha(\psi) = \frac{m+1}{2m} \con
\end{equation}
Note that it is important to use the quadratic form formulation here, as the domain of $\calH_\alpha$ depends on $\alpha$ and hence $\psi$ cannot be fixed when taking the derivative of $\langle\psi | \calH_\alpha\psi\rangle$ with respect to $\alpha$. Note also the minus sign in front of the last term in \eqref{eq:tan}; a naive derivative of  \eqref{eq:tan} would give the wrong sign!

The $L^1$-property \eqref{l1n} claimed in Theorem~\ref{thm:tan} does not make use of the boundary condition \eqref{bc} satisfied by $\psi \in D(\calH_\alpha)$ and holds more generally, in fact. The identity \eqref{eq:tan} only holds for $\psi$ satisfying  \eqref{bc}, however; i.e., it  holds for all functions $\psi$ in the domain of $\calH_\alpha$. (As already mentioned in the beginning of this section, self-adjointness of  $\calH_\alpha$ on this domain is not actually needed here. In particular, Theorem~\ref{thm:tan} holds for all $m>0$.)

The equations \eqref{def:rg}, \eqref{l1n}, \eqref{eq:tan} and \eqref{fhp} can be interpreted as a rigorous formulation of the  Tan relations  introduced in \cite{Tan}. There is actually one more relation, a  virial type theorem. It is an immediate consequence of the relation $U_\lambda^{-1} \calH_\alpha U_\lambda = \lambda^2 \calH_{\lambda^{-1}\alpha}$ for scaling the variables by $\lambda > 0$ and we shall not discuss it further here.

The proof of Theorem~\ref{thm:tan} will be given in Section~\ref{sec:tanproof}.

\section{Preliminaries}\label{sec:prelim}

Before giving the proof of the results in the previous section, we collect here a few auxiliary facts that will be used in the proofs. 

\begin{lemma}\label{lem:nl1}
The operator $\sigma$ on $L^2(\R^3)$ with integral kernel
\begin{equation}
\sigma(s,t) = (s^2+1)^{(\beta-1)/4} (t^2+1)^{-(\beta+1)/4} \frac 1{s^2 + t^2 +\lambda s\cdot t+1} 
\end{equation}
is bounded for $-2<\lambda<2$ and $-2<\beta < 2$.
\end{lemma}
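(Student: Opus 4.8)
The plan is to prove boundedness by the Schur test, after first stripping the angular dependence out of the denominator.

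First I would use the hypothesis $-2<\lambda<2$ to see that the quadratic form $s^2+t^2+\lambda\,s\cdot t$ is positive definite and comparable to $s^2+t^2$: since $|s\cdot t|\le\tfrac12(s^2+t^2)$, one has $(1-\tfrac{|\lambda|}2)(s^2+t^2)\le s^2+t^2+\lambda\,s\cdot t\le(1+\tfrac{|\lambda|}2)(s^2+t^2)$, and hence $s^2+t^2+\lambda\,s\cdot t+1\asymp 1+s^2+t^2$ with constants depending only on $\lambda$ and uniform in the angle between $s$ and $t$. Thus $|\sigma(s,t)|\le C\,\tilde\sigma(s,t)$ pointwise, where
\[
\tilde\sigma(s,t)=(1+s^2)^{(\beta-1)/4}(1+t^2)^{-(\beta+1)/4}\,(1+s^2+t^2)^{-1},
\]
and it suffices to bound the integral operator with the angle-independent kernel $\tilde\sigma$, since pointwise domination of the kernels transfers the operator-norm bound.

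Next I would apply the (asymmetric) Schur test with the single weight $w(s)=(1+s^2)^{-3/4}$, the regularized version of the scale-covariant weight $|s|^{-3/2}$ that is natural here because the homogeneous part of $\tilde\sigma$ has degree $-3$, equal to minus the dimension. Pulling the $s$- resp.\ $t$-prefactor out of $\tilde\sigma$, the two Schur inequalities $\int\tilde\sigma(s,t)w(t)\,\dI t\le C_1 w(s)$ and $\int\tilde\sigma(s,t)w(s)\,\dI s\le C_2 w(t)$ reduce to the two scalar estimates
\[
\int_{\R^3}\frac{(1+t^2)^{-(\beta+4)/4}}{1+s^2+t^2}\,\dI t\ \lesssim\ (1+s^2)^{-(\beta+2)/4},\qquad \int_{\R^3}\frac{(1+s^2)^{(\beta-4)/4}}{1+s^2+t^2}\,\dI s\ \lesssim\ (1+t^2)^{(\beta-2)/4}.
\]
I would establish each by splitting the domain into $\{|t|\le|s|\}$ and $\{|t|\ge|s|\}$ (resp.\ $\{|s|\lessgtr|t|\}$), on which $(1+s^2+t^2)^{-1}$ is comparable to $(1+s^2)^{-1}$ or $(1+t^2)^{-1}$, and evaluating the resulting power-law radial integrals. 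Both regions produce exactly the claimed power, so no logarithmic loss occurs, and the convergence of the radial integrals at the origin and at infinity is precisely what forces $\beta<2$ and $\beta>-2$. With both Schur bounds in hand, the operator with kernel $\tilde\sigma$ (and hence $\sigma$) is bounded with norm $\le C\sqrt{C_1C_2}<\infty$.

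I expect the only genuine work to be the exponent bookkeeping in the last step: one must verify that the near and far contributions have matching size and that the emerging power of $(1+s^2)$, resp.\ $(1+t^2)$, is exactly the one needed to reproduce the weight $w$. The parameter $\lambda$ plays a role only in the first reduction, while the full admissible range $-2<\beta<2$ enters solely through the convergence of these two elementary integrals.
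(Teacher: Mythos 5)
Your proposal is correct and follows essentially the same route as the paper: both first reduce to $\lambda=0$ by a pointwise bound using $|s\cdot t|\le\tfrac12(s^2+t^2)$, and then apply a Schur test with a power weight. Your specific weight $w(s)=(1+s^2)^{-3/4}$ is exactly the paper's choice $h(t)=(t^2+1)^{\gamma}$ with $\gamma=3/4$, the midpoint of the paper's admissible interval $(1+|\beta|)/4<\gamma<(5-|\beta|)/4$, and your region-splitting computation is just the explicit version of the paper's ``one easily checks'' step.
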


\begin{proof}
We use the Schur test in the form
\begin{equation}\label{rssn}
\| \sigma\| \leq \frac 12 \sup_s  h(s) \int_{\R^3} h(t)^{-1} \left( |\sigma(s,t)| + |\sigma(t,s)| \right) \dI t
\end{equation}
for any positive function $h$, which is a consequence of the Cauchy-Schwarz inequality. Since $|\lambda|<2$, a pointwise estimate of the kernel reduces the problem to the case $\lambda =0$. Choosing $h(t) = (t^2 +1 )^{\gamma}$ one easily checks that the right side of (\ref{rssn}) is finite if and only if  $(1+|\beta|)/4<\gamma<(5-|\beta|)/4$.
\end{proof}

In the special case $\beta=0$, Lemma~\ref{lem:nl1} can be used to show that, for some $c>0$,  $|\Toff(\xi)| \leq c (N-1)  \Tdiag(\xi)$ for all $\xi\in H^{1/2}_{\rm as}(\R^{3(N-1)})$.
In particular, $F_\alpha$ is well-defined on its domain \eqref{def:DF}. 
Similarly, $\| L^{(\beta -1)/2} \Gamma \xi\|_{L^{2}(\R^{3 (N-1)})}$ is finite for $\xi \in H^{(\beta+1)/2}_{\rm as}(\R^{3 (N-1)})$ for $0\leq \beta < 2$. For $\beta=1$, this implies that the domain of $\Gamma$ contains $ H^{1}_{\rm as}(\R^{3 (N-1)})$.

\begin{lemma}\label{lem:nl2}
The operator $\sigma$ on $L^2(\R^3)$ with integral kernel
\begin{equation}\label{sig:lem2}
\sigma(s,t) = \left(  \frac { (s^2+\nu)^{(\beta-1)/4}}{ (t^2+\nu)^{(\beta+1)/4}} +   \frac { (t^2+\nu)^{(\beta-1)/4}}{ (s^2+\nu)^{(\beta+1)/4}}  \right)  \frac 1{s^2 + t^2 + \lambda s\cdot t +1} 
\end{equation}
is bounded and non-negative  for $-2 <  \beta <  2$, $\nu \geq 1/2$ and $-2<\lambda\leq 0$. 
\end{lemma}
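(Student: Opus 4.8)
The boundedness is a minor variant of Lemma~\ref{lem:nl1}. Since $|\lambda|<2$ and $|s\cdot t|\leq\tfrac12(s^2+t^2)$, the denominator satisfies $s^2+t^2+\lambda s\cdot t+1\geq(1-|\lambda|/2)(s^2+t^2)+1$, so a pointwise estimate of the kernel reduces the question to $\lambda=0$; running the Schur test \eqref{rssn} with weight $h(t)=(t^2+\nu)^{\gamma}$ for a suitable $\gamma$ then shows that each of the two (mutually transposed) summands in \eqref{sig:lem2} defines a bounded operator, exactly as in the proof of Lemma~\ref{lem:nl1}.

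For the non-negativity the plan is to strip off a symmetric positive weight and reduce to an anticommutator. Write $P$ for multiplication by $s^2+\nu$ and $R$ for the operator with kernel $(s^2+t^2+\lambda s\cdot t+1)^{-1}$. Factoring $(s^2+\nu)^{(\beta-1)/4}=(s^2+\nu)^{\beta/2}(s^2+\nu)^{-(\beta+1)/4}$ one checks the kernel identity
\begin{equation}
\sigma = V\left(P^{\beta/2}R+RP^{\beta/2}\right)V,\qquad V\coloneqq \text{multiplication by }(s^2+\nu)^{-(\beta+1)/4}.
\end{equation}
Since $V$ is a positive multiplication operator, $\sigma\geq0$ is equivalent to $\{P^{\beta/2},R\}\geq0$. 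The integrand in \eqref{sig:lem2} is invariant under $\beta\mapsto-\beta$, so I may assume $0\leq\beta<2$; the case $\beta=0$ is trivial since $\{1,R\}=2R$ and $R\geq0$.

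The two inputs are the positivity of $R$ and an explicit evaluation of the ``first power'' anticommutator. First, $R(s,t)=\int_0^\infty e^{-\tau}e^{-\tau(s^2+t^2)}e^{\tau|\lambda|\,s\cdot t}\dI\tau$, and since $\lambda\leq0$ the factor $e^{\tau|\lambda|\,s\cdot t}$ is a positive-definite kernel (via its Gaussian representation); being a superposition of products of positive-definite kernels, $R\geq0$. Second, using $s^2+t^2+2\nu=(s^2+t^2+\lambda s\cdot t+1)+(2\nu-1)-\lambda s\cdot t$ one finds the operator identity
\begin{equation}
\{P,R\}=K_1+(2\nu-1)\,R+|\lambda|\sum_{i=1}^3 S_iRS_i\ \geq\ 0,
\end{equation}
where $K_1$ is the kernel identically equal to $1$ (with $\langle f,K_1 f\rangle=|\int f|^2\geq0$) and $S_i$ denotes multiplication by the $i$-th coordinate. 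Here $\nu\geq1/2$ guarantees $2\nu-1\geq0$ and $\lambda\leq0$ guarantees the correct sign of the last term, so each summand is manifestly non-negative. To pass to fractional powers I would use the subordination formula, valid precisely for $0<\beta<2$,
\begin{equation}
P^{\beta/2}=\frac{\sin(\pi\beta/2)}{\pi}\int_0^\infty \mu^{\beta/2-1}\,\frac{P}{P+\mu}\,\dI\mu,
\end{equation}
giving $\{P^{\beta/2},R\}=\frac{\sin(\pi\beta/2)}{\pi}\int_0^\infty\mu^{\beta/2-1}\{M_\mu,R\}\dI\mu$ with $M_\mu\coloneqq P(P+\mu)^{-1}$. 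Computing $M_\mu(s)+M_\mu(t)$ over the common denominator yields
\begin{equation}
\{M_\mu,R\}=(P+\mu)^{-1}\left(2\,PRP+\mu\,\{P,R\}\right)(P+\mu)^{-1}\ \geq\ 0\qquad(\mu>0),
\end{equation}
because $PRP\geq0$, the bracket adds $\mu$ times the non-negative operator of the previous display, and $(P+\mu)^{-1}$ is a positive multiplication operator. As $\sin(\pi\beta/2)>0$ and $\mu^{\beta/2-1}>0$ on $(0,\infty)$, integration gives $\{P^{\beta/2},R\}\geq0$, hence $\sigma\geq0$.

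The conceptual heart — and the step where all three hypotheses are consumed at once — is the interplay of the last two displays: positivity of $R$ (needing $\lambda\leq0$), the closed form of $\{P,R\}$ (needing $\nu\geq1/2$ and $\lambda\leq0$), and convergence of the subordination integral (needing $\beta<2$). I expect the only genuine subtlety to be justifying the operator manipulations rigorously: the integrals converge as written, but $R$ and $K_1$ are unbounded quadratic forms, so the computation should be carried out on a dense domain (e.g.\ Schwartz functions) and combined with the boundedness of $\sigma$ established above. It is reassuring that the threshold $|\beta|=2$, beyond which the subordination integral diverges, coincides exactly with the value at which boundedness fails.
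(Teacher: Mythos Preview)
Your proof is correct and follows essentially the same route as the paper. Both arguments factor out a symmetric positive weight, invoke an integral representation for the fractional power (you use $x^{\beta/2}=c_\beta\int_0^\infty \mu^{\beta/2-1}\frac{x}{x+\mu}\,d\mu$, the paper uses $x^{-\beta/2}=c_\beta\int_0^\infty (x+r)^{-1}r^{-\beta/2}\,dr$), and then reduce positivity of the integrand to the very same algebraic decomposition $s^2+t^2+2\nu=(s^2+t^2+\lambda s\cdot t+1)+(2\nu-1)-\lambda s\cdot t$, which is precisely where the hypotheses $\nu\geq 1/2$ and $\lambda\leq 0$ enter; your extra $2PRP$ term (absent in the paper's version because of the different integral formula) is harmless since it is manifestly non-negative. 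One small overstatement: ``$\sigma\geq 0$ is equivalent to $\{P^{\beta/2},R\}\geq 0$'' should read ``is implied by'', since $V$ is not invertible --- but only that implication is needed.
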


\begin{proof}
Boundedness follows immediately from Lemma~\ref{lem:nl1}.  For $\beta=0$, positivity can be deduced   
from the integral representation
\begin{equation}\label{34}
 \left(  t^2 + s^2 +  \lambda  s \cdot t + 1 \right)^{-1} = \int_0^\infty  e^{-r  (1 + \lambda/2) t^2} e^{- r(1+ \lambda/2) s^2} e^{r \lambda (t - s)^2/2 } e^{-r}\dI r \,,
\end{equation}
noting that $-2< \lambda\leq 0$ and that the Gaussian has a positive Fourier transform. We are thus left with proving positivity for $\beta \neq 0$. Without loss of generality, we may assume $\beta > 0$, since $\sigma$ is invariant under the transformation $\beta\to -\beta$. To this aim, we use
\begin{equation}
x^{-\beta/2} = c_\beta \int_{0}^\infty \frac 1{x + r }  r^{-\beta/2} \dI r 
\end{equation}
with $c_\beta =\pi^{-1} \sin\left(\frac \pi 2 \beta\right)$  for $x>0$ and $0< \beta < 2$ to
rewrite the kernel as 
\begin{equation}
\sigma(s,t) = c_\beta  (s^2+\nu)^{(\beta-1)/4} (t^2+\nu)^{(\beta-1)/4} \int_{0}^\infty \left( \frac 1{s^2 + \nu + r }  + \frac 1{t^2+\nu + r} \right)  \frac {r^{-\beta/2}}{s^2 + t^2 + \lambda s\cdot t +1}  \dI r  
\end{equation}
Let us rewrite the integrand further as 
\begin{align}\nonumber
& r^{-\beta/2} \frac 1{s^2 + \nu + r }  \frac 1{t^2+\nu + r} \frac { s^2 + t^2 + 2 (\nu + r)  }{s^2 + t^2 + \lambda s\cdot t +1} \\
& = r^{-\beta/2} \frac 1{s^2 + \nu + r }  \frac 1{t^2+\nu + r} \left( 1 +  \frac {   2 ( \nu + r) - 1 -\lambda s\cdot t  }{s^2 + t^2 + \lambda s\cdot t +1} \right)  \label{seeth}
\end{align}
Using again \eqref{34}, as well as  $2(\nu+r) \geq 1$ and $\lambda\leq 0$, we see that \eqref{seeth} defines a non-negative operator. This completes the proof.
\end{proof}

\begin{lemma}\label{lem:nl3}
Consider the bounded operator $\sigma$ on $L^2(\R^3)$ with integral kernel given by \eqref{sig:lem2}
for $-2 <  \beta <  2$, $\nu \geq 1/2$ and $0\leq \lambda < 2$. 
Its positive and negative parts are the operators with kernels
\begin{align}\nonumber 
\sigma_+(s,t) & = \frac 12 \left( \sigma(s,t) + \sigma(s,-t) \right) \\
\sigma_-(s,t) & = -\frac 12 \left( \sigma(s,t) - \sigma(s,-t) \right)
\end{align}
respectively. 
\end{lemma}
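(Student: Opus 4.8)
The plan is to diagonalize $\sigma$ with respect to spatial parity and thereby reduce the sign analysis to Lemma~\ref{lem:nl2}. First I would record two elementary symmetries of the kernel \eqref{sig:lem2}. Since the prefactor depends on $s$ and $t$ only through $s^2$ and $t^2$, while the denominator transforms as $s^2+t^2+\lambda s\cdot t+1 \to s^2+t^2-\lambda s\cdot t+1$ under either $s\to -s$ or $t\to -t$, one has the reflection symmetry $\sigma(-s,t)=\sigma(s,-t)$ as well as $\sigma(s,t)=\sigma(t,s)$. The second identity, together with reality of the kernel, shows that $\sigma$ is self-adjoint (boundedness being supplied by Lemma~\ref{lem:nl1}), so it genuinely possesses a positive/negative part decomposition.

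Next I would introduce the parity operator $(Uf)(s)=f(-s)$, a self-adjoint unitary with $U^2=\Ident$, and the associated orthogonal projections $P_\pm=\tfrac12(\Ident\pm U)$ onto the even and odd subspaces of $L^2(\R^3)$. The reflection symmetry says exactly that $\sigma$ commutes with $U$, hence with $P_\pm$, so $\sigma$ leaves the even and odd subspaces invariant. The key point is that the operator $\tilde\sigma:=\sigma U$, which has kernel $\sigma(s,-t)$, is precisely the kernel \eqref{sig:lem2} with $\lambda$ replaced by $-\lambda$; since $0\leq\lambda<2$ corresponds to $-2<-\lambda\leq 0$, Lemma~\ref{lem:nl2} applies and gives $\tilde\sigma\geq 0$.

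The signs of $\sigma$ on the two subspaces then follow immediately. For an even function $f$ we have $Uf=f$, so $\langle f,\sigma f\rangle=\langle f,\sigma U f\rangle=\langle f,\tilde\sigma f\rangle\geq 0$; for an odd function $g$ we have $Ug=-g$, so $\langle g,\sigma g\rangle=-\langle g,\tilde\sigma g\rangle\leq 0$. Hence $\sigma_+:=P_+\sigma P_+\geq 0$ and $\sigma_-:=-P_-\sigma P_-\geq 0$. Computing the kernels from $\sigma P_\pm=\tfrac12(\sigma\pm\sigma U)$ (and using that $\sigma$ commutes with $P_\pm$) reproduces exactly the claimed formulas for $\sigma_\pm(s,t)$, while $\sigma=P_+\sigma P_++P_-\sigma P_-=\sigma_+-\sigma_-$. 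Since $\sigma_+$ is supported on the even subspace and $\sigma_-$ on the odd subspace, their ranges are orthogonal and $\sigma_+\sigma_-=\sigma_-\sigma_+=0$, so by the uniqueness of the Jordan decomposition $\sigma_+$ and $\sigma_-$ are indeed the positive and negative parts of $\sigma$.

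The one substantive step—and the crux of the lemma—is the passage from $0\leq\lambda<2$, where the kernel is not sign-definite and is therefore \emph{not} covered by Lemma~\ref{lem:nl2}, to the reflected parameter $-\lambda\in(-2,0]$, where it is. The reflection symmetry $\sigma(-s,t)=\sigma(s,-t)$ is what makes this reduction an exact unitary equivalence rather than a mere pointwise bound, and it is the only place where the precise structure of the denominator (its linearity in $s\cdot t$) is used; the remainder of the argument is bookkeeping with the projections $P_\pm$.
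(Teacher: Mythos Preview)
Your proof is correct and follows essentially the same approach as the paper's: both introduce the reflection (parity) operator, observe that it commutes with $\sigma$ and that $\sigma$ composed with reflection has the kernel \eqref{sig:lem2} with $\lambda\to -\lambda$, invoke Lemma~\ref{lem:nl2} for the positivity of that product, and then read off the positive and negative parts as $\pm\tfrac12\sigma(\Ident\pm R)$. Your version is slightly more explicit in spelling out the even/odd decomposition via the projections $P_\pm$ and in checking self-adjointness, but the argument is the same.
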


\begin{proof}
Let $R$ denote the reflection operator $(R \varphi)(s) = \varphi(-s)$ for $\varphi\in L^2(\R^3)$. The operators $R$ and $\sigma$ clearly commute. Moreover, the product $\sigma R$ equals the operator with integral kernel \eqref{sig:lem2} and $\lambda$ replaced by $-\lambda$, which was shown to be non-negative in Lemma~\ref{lem:nl2}. One readily checks that this implies that the positive and negative parts of $\sigma$ are given by 
\begin{equation}
\sigma_\pm = \pm \frac 12  \sigma \left( 1 \pm R \right)\,,
\end{equation}
respectively. In fact, clearly $\sigma_+ \sigma_- = \sigma_- \sigma_+ = 0$, and $\sigma_\pm = \frac 12 \sigma R (1 \pm R)$, which is a product of two commuting nonnegative operators.
\end{proof}

\section{Proof of Theorem~\ref{thm1}}\label{sec:proof}

We assume $N\geq 3$ and define, for fixed  $\vec q \in \R^{3(N-2)}$ and $-2<\beta<2$,  an operator $\tau^\beta$ on $L^2(\R^{3})$ via the quadratic form
\begin{equation}\label{def:tau}
\langle \varphi | \tau^\beta | \varphi\rangle =  \frac 12  \int_{\R^{6}} \varphi^\ast (s )  \varphi (t ) \left( \frac {L(s,\vec q)^{(\beta-1)/2}}{ L(t,\vec q)^{(\beta+1)/2}} + \frac {L(t,\vec q)^{(\beta-1)/2}}{ L(s,\vec q)^{(\beta+1)/2}}  \right)  G(s,t,\vec q)  \dI s \dI t 
\end{equation}
where $L$ and $G$ are defined in \eqref{def:L} and \eqref{def:G}, respectively. 
Let $K := \sum_{i=1}^{N-2} q_i$, and recall that $A= 1/(m+2)$. 
The following observation is key to our further investigation. We shall need it here for $\beta=0$ only, but state it more generally for later use in the proof of Theorem~\ref{thm:domain}. 

\begin{lemma}\label{lem1}
The operator $\tau^\beta$ defined in \eqref{def:tau} is bounded on $L^2(\R^3)$. Its positive and negative parts, $\tau^\beta_\pm$, are the operators with integral kernels
\begin{align}\nonumber
\tau^\beta_+(s,t;\vec q) & =  \frac 14  \left( \frac {L(s,\vec q)^{(\beta-1)/2}}{ L(t,\vec q)^{(\beta+1)/2}} + \frac {L(t,\vec q)^{(\beta-1)/2}}{ L(s,\vec q)^{(\beta+1)/2}}  \right)    \left(   G(s,t,\vec q) +  G(s,-t- 2 AK,\vec q) \right)\\
\tau^\beta_-(s,t;\vec q ) & =   - \frac 14  \left( \frac {L(s,\vec q)^{(\beta-1)/2}}{ L(t,\vec q)^{(\beta+1)/2}} + \frac {L(t,\vec q)^{(\beta-1)/2}}{ L(s,\vec q)^{(\beta+1)/2}}  \right)   \left(   G(s,t,\vec q) -   G(s,-t- 2 AK,\vec q) \right) \label{def:taupm}
\end{align}
respectively.
\end{lemma}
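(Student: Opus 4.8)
The plan is to reduce $\tau^\beta$, by a translation and a dilation of the integration variable, to an operator of exactly the form treated in Lemmas~\ref{lem:nl2} and~\ref{lem:nl3}, and then to read off the positive and negative parts directly from Lemma~\ref{lem:nl3}. Throughout, $\vec q$ (hence $K=\sum_{i}q_i$) is fixed, and $\tau^\beta$ is a self-adjoint operator on $L^2(\R^3)$ with real, symmetric kernel $\tau^\beta(s,t)=\tfrac12\big(\tfrac{L(s,\vec q)^{(\beta-1)/2}}{L(t,\vec q)^{(\beta+1)/2}}+\tfrac{L(t,\vec q)^{(\beta-1)/2}}{L(s,\vec q)^{(\beta+1)/2}}\big)G(s,t,\vec q)$.

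First I would pass to shifted variables by conjugating with the unitary translation $(U\varphi)(s)=\varphi(s-AK)$, i.e.\ setting $\tilde s=s+AK$, $\tilde t=t+AK$ with $A=1/(m+2)$. The point of this particular value is the identity $A(m+2)=1$, which is exactly what forces the linear-in-$K$ cross terms $\propto s\cdot K$ and $\propto t\cdot K$ to cancel, simultaneously in the denominator $G(s,t,\vec q)^{-1}$ and in $L(s,\vec q)^2$, $L(t,\vec q)^2$. After the shift one finds
\[
G(s,t,\vec q)^{-1}=\tilde s^2+\tilde t^2+\tfrac{2}{m+1}\,\tilde s\cdot\tilde t+C',\qquad L(s,\vec q)^2=(2\pi^2)^2\Big(\tfrac{m(m+2)}{(m+1)^2}\tilde s^2+c_0\Big),
\]
with $C'$ and $c_0$ independent of $\tilde s,\tilde t$; in particular $L(s,\vec q)$ now depends on $\tilde s$ only through $\tilde s^2$, so it is unchanged under $\tilde s\mapsto-\tilde s$.

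The crucial computation is that these two constants coincide: using $K^2=|\vec q|^2+2\sum_{i<j}q_i\cdot q_j$, a short calculation gives
\[
C'=c_0=\tfrac{m}{m+1}\Big(|\vec q|^2+\tfrac{K^2}{m+2}\Big)+\mu .
\]
Rescaling $\tilde s$ by $\sqrt{C'}$ (a unitary dilation) to normalise the denominator's constant to $1$, the operator becomes a positive multiple of one with kernel exactly of the form \eqref{sig:lem2}, with $\lambda=\tfrac{2}{m+1}$ and $\nu=\tfrac{(m+1)^2}{m(m+2)}$. Since $m>0$ one has $\lambda\in(0,2)$ and $\nu=1+(m^2+2m)^{-1}>1\geq\tfrac12$, so the hypotheses of Lemmas~\ref{lem:nl2} and~\ref{lem:nl3} hold; boundedness of $\tau^\beta$ then follows from Lemma~\ref{lem:nl1}. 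In the shifted, rescaled picture the reflection $R:\tilde s\mapsto-\tilde s$ commutes with the operator (the kernel depends on $\tilde s,\tilde t$ only through $\tilde s^2,\tilde t^2,\tilde s\cdot\tilde t$), and composing with $R$ flips the sign of $\lambda$, landing in the non-negative case $-2<\lambda\le0$ of Lemma~\ref{lem:nl2}. Lemma~\ref{lem:nl3} then gives the positive and negative parts as $\pm\tfrac12\sigma(1\pm R)$.

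Finally I would undo the dilation and the translation. Under this, $R$ corresponds in the original variable to $t\mapsto-t-2AK$, and $\sigma(s,-t)$ becomes the kernel with $L$-factor unchanged (because $L(-t-2AK,\vec q)=L(t,\vec q)$) and $G(s,t,\vec q)$ replaced by $G(s,-t-2AK,\vec q)$. Combining the $\tfrac12$ from Lemma~\ref{lem:nl3} with the $\tfrac12$ in the definition of $\tau^\beta$ produces the factor $\tfrac14$ and yields precisely the kernels $\tau^\beta_\pm$ in \eqref{def:taupm}. The main obstacle is the algebraic bookkeeping of the shift, and above all verifying the coincidence $C'=c_0$: it is exactly this that makes $\nu$ independent of $\vec q$ and safely larger than $\tfrac12$, so that Lemma~\ref{lem:nl2} applies uniformly in $\vec q$ and $\mu$; without it, $\nu$ could depend on $\vec q$ and drop below $\tfrac12$, invalidating the reduction.
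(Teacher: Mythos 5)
Your proposal is correct and is essentially identical to the paper's own proof: a unitary translation by $AK$, the key observation that the \emph{same} constant $C=\tfrac m{m+1}\left(Q^2+AK^2\right)+\mu$ appears in both the shifted $G$ and the shifted $L$ (your verification $C'=c_0$, which the paper records implicitly by writing the same $C$ in \eqref{43}--\eqref{45}), a rescaling by $\sqrt{C}$ to reach the kernel \eqref{sig:lem2} with $\lambda=2/(m+1)$ and $\nu=(m+1)^2/(m(m+2))>1/2$, and then Lemma~\ref{lem:nl1} for boundedness and Lemma~\ref{lem:nl3} for $\sigma_\pm=\pm\tfrac12\sigma(1\pm R)$, with the translation undone at the end so that the reflection becomes $t\mapsto -t-2AK$.
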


\begin{proof}
Let  $Q^2 := \sum_{i=1}^{N-2} q_i^2$, and define  $\lambda:=2/(m+1)$. A simple calculation shows that
\begin{equation}\label{43}
G(s-A K,t-A K,\vec q)^{-1} 
=  t^2 + s^2 +  \lambda  s \cdot t + C
\end{equation}
where
\begin{equation}\label{44}
C = C(\vec q) =  \frac m{m+1} \left( A  K^2 + Q^2 \right) +\mu
\end{equation}
Similarly,
\begin{equation}\label{45}
L(s-AK,\vec q) = 2\pi^2 \left( \frac {m(m+2)}{(m+1)^2} s^2  +C \right)^{1/2}
\end{equation}
In particular, after a unitary translation by $AK$, the operator $\tau^\beta$ becomes the operator $\sigma$ with integral kernel
\begin{align}\nonumber
\sigma(s,t) & =  \frac{m+1}{4\pi^2} \left(   \frac{ \left[ {m(m+2)}  s^2  + (m+1)^2 C  \right]^{(\beta-1)/4} }{ \left[ {m(m+2)} t^2  + {(m+1)^2}   C \right]^{(\beta+1)/4}}  + \frac{ \left[ {m(m+2)}  t^2  + (m+1)^2 C  \right]^{(\beta-1)/4} }{ \left[ {m(m+2)} s^2  + {(m+1)^2}   C \right]^{(\beta+1)/4}} \right)  \\ &  \quad \times \left(  t^2 + s^2 +  \lambda  s \cdot t + C \right)^{-1}
\end{align} 

After a simple rescaling of the variables by $\sqrt{C}$, this is exactly of the form \eqref{sig:lem2}, with $\nu = (m+1)^2/(m(m+2))> 1/2$ (in fact, $>1$). Hence boundedness of $\sigma$ follows from Lemma~\ref{lem:nl1}. Moreover, Lemma~\ref{lem:nl3} applies, which states that 
 the positive and negative parts of $\sigma$ are given by 
\begin{equation}
\sigma_\pm = \pm \frac 12  \sigma \left( 1 \pm R \right)\,,
\end{equation}
where $R$ denotes reflection. 
  Undoing the unitary translation by $AK$, this leads to the statement of the lemma.
\end{proof}

For $\xi\in H^{1/2}_{\rm as}(\R^{3(N-1)})$, we define $\varphi \in L^{2}_{\rm as}(\R^{3(N-1)})$ by $ \varphi(s,\vec q) = L(s,\vec q)^{1/2} \hat\xi(s,\vec q)$. Then $\Tdiag(\xi) =  \| \varphi \|_{L^2(\R^{3(N-1)})}^2$, and 
\begin{align}\nonumber
 \Toff(\xi) &=  (N-1) \int_{\R^{3N}}  \varphi^\ast (s,\vec q )  \varphi (t,\vec q) L(s,\vec q)^{-1/2} L(t,\vec q)^{-1/2} G(s,t,\vec q)  \dI s \dI t \dI \vec q \\ &  \geq  - (N-1) \int_{\R^{3N}}  \varphi^\ast (s,\vec q)  \varphi (t,\vec q) \tau^0_-(s,t;\vec q)   \dI s \dI t \dI \vec q \label{tore}
\end{align}
where we simply dropped the positive part of the operator $\tau^0$ appearing on the right side. Its negative part, $\tau^0_-$, is explicitly identified in Lemma~\ref{lem1}. To proceed, we use the fact that $\varphi$ is antisymmetric. We introduce
\begin{equation}\label{def:tt}
\tilde\tau_-(s,\vec q,t,\vec \ell) = \tau^0_-(s,t;\vec q) \delta(\vec q - \vec \ell)
\end{equation}
for $\vec \ell \in \R^{3(N-2)}$, and rewrite the  term on the right side of  \eqref{tore} as
\begin{align}\nonumber
& (N-1) \int_{\R^{3N}}  \varphi^\ast (s,\vec q)  \varphi (t,\vec q) \tau^0_-(s,t;\vec q)   \dI s \dI t \dI \vec q \\ & = \sum_{i=0}^{N-2}   \int_{\R^{6(N-1)}}  \varphi^\ast (s,\vec q)  \varphi (t,\vec \ell)   \tilde\tau_-(q_i, \hat q_i ,\ell_i , \hat \ell_i)    \dI s \dI t \dI \vec q \dI \vec \ell \label{212}
\end{align}
where $\hat q_i = ( q_1, \dots, q_{i-1},s , q_{i+1} ,\dots, q_{N-2})$ and $\hat \ell_i = ( \ell_1, \dots, \ell_{i-1},t  , \ell_{i+1} ,\dots, \ell_{N-2})$ for $1\leq i \leq N-2$, as well as $q_0=s$, $\hat q_0 = \vec q$, $\ell_0 = t$, $\hat  \ell_0 = \vec \ell$.  To bound this last expression, we use the Schwarz inequality, as in \eqref{rssn}, to obtain
\begin{equation}
\eqref{212} \leq \| \varphi \|_{L^2(\R^{3(N-1)})}^2 \sup_{s,\vec q} h(s,\vec q)   \sum_{i=0}^{N-2}  \int_{\R^{3(N-1)}} h(t,\vec \ell)^{-1}  | \tilde\tau_-(q_i,\hat q_i,\ell_i, \hat \ell_i)|   \dI t  \dI \vec\ell  \label{tle}
\end{equation}
for any positive function $h$. Assume that $h$ is symmetric with respect to permutations. Inserting the special structure \eqref{def:tt}, the expression on the right side of  \eqref{tle} then equals
\begin{equation}\label{214}
 \| \varphi \|_{L^2(\R^{3(N-1)})}^2 \sup_{s,\vec q} h(s,\vec q)  \sum_{i=0}^{N-2} \int_{\R^{3}}   h(t,\hat q_i)^{-1}| \tau^0_-(q_i,t; \hat q_i) |  \dI t 
\end{equation}

We shall choose $h(s,\vec q) = s^2 \prod_{j=1}^{N-2} q_j^2$ in \eqref{214}. The resulting bound is then
\begin{align}\nonumber
\eqref{212} & \leq  \| \varphi \|_{L^2(\R^{3(N-1)})}^2 \sup_{s,\vec q}  \sum_{i=0}^{N-2}  \int_{\R^{3}}  \frac {q_i^2}{t^2} | \tau^0_-(q_i,t; \hat q_i) |   \dI t \\ & \leq \| \varphi \|_{L^2(\R^{3(N-1)})}^2 \sup_{s,\vec q}  \left( s^2 + Q^2 \right) \max_{0\leq i\leq N-2} 
 \int_{\R^{3}}  \frac {1}{t^2} | \tau^0_-(q_i,t; \hat q_i) |   \dI t 
\end{align}
where we again use the notation $Q^2 = \sum_{i=1}^{N-2} q_i^2$, as in the proof of Lemma~\ref{lem1}. Since for any $1\leq i \leq N-2$, $s^2 + Q^2$ is symmetric under exchange of $s$ and $q_i$, we can drop the maximum over $i$ when taking the supremum over $s$ and $\vec q$, and simply take $i=0$ (or any other value of $i$, in fact). We thus arrive at
\begin{equation}\label{rso}
\eqref{212}  \leq \| \varphi \|_{L^2(\R^{3(N-1)})}^2  \sup_{s,\vec q}  \left( s^2 + Q^2 \right)  \int_{\R^{3}}  \frac {1}{t^2} | \tau^0_-(s,t; \vec q) |   \dI t 
\end{equation}

To complete the proof of  \eqref{eq:thm}, we need to show that the term multiplying $ \| \varphi \|_{L^2(\R^{3(N-1)})}^2 = \Tdiag(\xi)$ on the right side of \eqref{rso} is bounded by $\Lambda(m)$. Recall the explicit expression of $\tau^0_-(s,t;\vec q)$, given in \eqref{def:taupm} above. We have 
\begin{align}\nonumber
|\tau^0_-(s,t;\vec q) | & =   \frac 1{\pi^2 (1+m)} \left( \frac{m}{(m+1)^2} (s+K)^2 + \frac m{m+1} (s^2 + Q^2) + \mu \right)^{-1/4}  \\ \nonumber &  \quad \times \left( \frac{m}{(m+1)^2} (t+K)^2 + \frac m{m+1} (t^2 + Q^2) + \mu \right)^{-1/4}   \\ & \quad \times   \frac {\left| (s+AK)\cdot (t+AK) \right|}{ \left[ (s+AK)^2 + (t+AK)^2 + \frac m{1+m}( Q^2 +A K^2) +\mu\right]^2  - \left[ \frac 2{(1+m)} (s+AK)\cdot(t+AK)\right]^2}    \label{tau-}
\end{align}
For an upper bound, we can replace $\mu$ by $0$. Moreover, we can replace the supremum over $\vec q\in \R^{3(N-2)}$ by a supremum over all $Q>0$ and $K\in \R^3$. This yields \eqref{eq:thm}.

To complete the proof of Theorem~\ref{thm1}, we have to show that $F_\alpha$ is closed for $\Lambda(m)<1$. This was already proved in \cite[Thm.~2.1]{correggi2012stability}, we include the proof here for completeness. Given a sequence $u_n\in D(F_\alpha)$ with $\|u_n-u_m\|_{L^2(\R^{3N})}\to 0$ and $F_\alpha(u_n-u_m)\to 0$ as $n,m\to \infty$, we need to show that there exists a $u\in D(F_\alpha)$ with $\lim_{n\to \infty} \|u_n - u\|_{L^2(\R^{3N})} = 0$ and $\lim_{n\to \infty} F_\alpha(u_n -u) = 0$. We choose any $\mu>0$ for $\alpha\geq 0$, and $\mu > \alpha^2 (2\pi(1-\Lambda(m))^{-2}$ for $\alpha<0$.  For such a choice, writing $u_n = w_n + G\xi_n$, the bound \eqref{eq:thm} implies that $\|w_n-w_m\|_{H^1(\R^{3N})} \to 0$ and $\| \xi_n - \xi_m \|_{H^{1/2}(\R^{3(N-1)})} \to 0$ as $n,m\to \infty$, and hence $w_n\to w$ and $\xi_n\to \xi$ for some $w$ and $\xi$, respectively, in the corresponding norms. Since $\|G(\xi_n-\xi_m)\|_{L^2(\R^{3N})} \leq \const \|\xi_n-\xi_m\|_{L^2(\R^{3(N-1)})}$, $u_n$ converges to $u=w+G\xi$ in $L^2(\R^{3N})$. Moreover, since $|F_\alpha(u_n-u)|$ is bounded from above by $\const( \|w_n-w\|_{H^1(\R^{3N})}^2 + \|\xi_n-\xi\|_{H^{1/2}(\R^{3(N-1)})}^2)$ (compare with the remark after Lemma~\ref{lem:nl1} in Section~\ref{sec:prelim}), the result follows. 
\hfill\qed

\bigskip

\begin{remark}
It is worth pointing out that the antisymmetry of the wave functions enters our proof of stability in three different ways. The first two concern the very definition of the model. First, there are no point interactions among the $N$ particles of mass $1$ themselves, due to the antisymmetry which forces the wave functions to vanish at particle coincidences. Second, the term $\Toff$ in the definition (\ref{def:Fal}) of the quadratic form $F_\alpha$ enters with a plus sign, while it would have a minus sign for bosons. This fact is crucial, as it allows to work with the negative part of the operator $\tau^0$ in \eqref{def:tau} instead of the positive part, which is larger. And third, we use the symmetry 
to replace the factor $(N-1)$ by a sum over particles in \eqref{212}.

 This last step would also work for bosons, only the symmetry of the absolute value of the wave functions is important. For the first two points, however, the antisymmetry is crucial. In the bosonic case, there is instability for any $N\geq 2$ and any $0<m<\infty$ \cite{Braaten06,Tamura1991,Yafaev74} (a fact known as the Thomas effect \cite{Thomas1935}).  
  While $\Toff$ can be bounded from below by $-\Tdiag$, as Theorem~\ref{thm1} shows, it  is in fact known that $\Toff(\xi)\leq \Tdiag(\xi)$ is false for suitable $\xi$ for any $m$ \cite{correggi2012stability}.
\end{remark}

\section{Proof of Theorem~\ref{thm:domain}}\label{sec:domain}

Let us define the operator $J$ by $\Gamma = L + J$, i.e., $\Toff(\xi) = \langle \xi | J \xi\rangle$ for $\xi \in H^1_{\rm as}(\R^{3(N-1)})$. For $0\leq \beta <2$, we have 
\begin{align}\nonumber
\| L^{(\beta -1)/2} \Gamma \xi\|_{L^{2}(\R^{3 (N-1)})}^2 & = \| L^{(\beta + 1)/2} \xi\|_{L^{2}(\R^{3 (N-1)})}^2 + \langle \xi | ( J L^{\beta} + L^\beta J ) \xi\rangle + \| L^{(\beta -1)/2} J \xi\|_{L^{2}(\R^{3 (N-1)})}^2
\\ & \geq \| L^{(\beta + 1)/2} \xi\|_{L^{2}(\R^{3 (N-1)})}^2 + \langle \xi | ( J L^{\beta} + L^\beta J ) \xi\rangle
\end{align}
for all $\xi \in H^{(\beta+1)/2}_{\rm as}(\R^{3 (N-1)})$. The result \eqref{217} thus follows if we can show that 
\begin{equation}
\langle \xi | ( J L^{\beta} + L^\beta J ) \xi\rangle   \geq  - \Lambda_\beta(m)  \| L^{(\beta+1)/2} \xi \|^2_{L^{2}(\R^{3 (N-1)})}
\end{equation} 
With $\varphi = L^{(\beta+1)/2} \xi$ this reads, equivalently, 
\begin{equation}
\langle \varphi | ( L^{-(\beta+1)/2} J L^{(\beta-1)/2} + L^{(\beta-1)/2} J  L^{-(\beta+1)/2} )\varphi \rangle   \geq  - \Lambda_\beta(m)  \| \varphi \|^2_{L^{2}(\R^{3 (N-1)})}
\end{equation} 
for all $\varphi \in L^{2}_{\rm as}(\R^{3 (N-1)})$. The left side equals
\begin{equation}\label{54}
(N-1)\int_{\R^{3N}} \hat\varphi^\ast (s, \vec q)  \hat\varphi (t, \vec q ) \left(  \frac{L(t,\vec q)^{(\beta-1)/2}}{ L(s,\vec q)^{(\beta+1)/2}}  + \frac{ L(s,\vec q)^{(\beta-1)/2} }{L(t,\vec q)^{(\beta+1)/2}} \right)  G(s,t,\vec q)  \dI s \dI t  \dI \vec q
\end{equation}
where $\vec q \in \R^{3(N-2)}$ and  $L$ and $G$ are defined in \eqref{def:L} and \eqref{def:G}, respectively.

The above integral over $s$ and $t$, for fixed $\vec q$, is the expectation of (twice) the operator $\tau^\beta$ defined in \eqref{def:tau}. Lemma~\ref{lem1} identifies its negative and positive parts. Dropping the latter, we thus have 
\begin{align}\nonumber
\eqref{54} & \geq   (N-1)\int_{\R^{3N}} \hat\varphi^\ast (s, \vec q)  \hat\varphi (t, \vec q ) \left(  \frac{L(t,\vec q)^{(\beta-1)/2}}{ L(s,\vec q)^{(\beta+1)/2}}  + \frac{ L(s,\vec q)^{(\beta-1)/2} }{L(t,\vec q)^{(\beta+1)/2}} \right) \\  & \qquad\qquad\qquad \times \frac 12 \left(  G(s,t,\vec q) - G(s,-t-2AK,\vec q)\right)  \dI s \dI t  \dI \vec q
\end{align}
The remainder of the proof proceeds in exactly the same way as in the proof of Theorem~\ref{thm1}, Eqs.~\eqref{def:tt}--\eqref{rso}, and we shall not repeat it here. The result is \eqref{217}, for any $0\leq \beta< 2$. The limiting case $\beta=2$ is then obtained by monotone convergence, using that $\Lambda_\beta(m)$ is convex and thus continuous in $\beta$. (Note that for $\beta=2$, the left side of \eqref{217} need not  be finite, a priori.) \hfill\qed

\section{Upper Bound on $\Lambda_\beta(m)$}\label{sec:Lambda}

In this section we shall prove an upper bound on $\Lambda_\beta(m)$. While only the case $0\leq \beta\leq 2$ is of interest here, our bound is actually valid for all $0\leq \beta<3$. We start with proving the bound \eqref{bound:L} on $\Lambda(m)$. Recall the definitions of $\Lambda(m)$ and $\ell_m$ in \eqref{def:Lm} and \eqref{def:lm}, respectively, as well as $A=(2+m)^{-1}$. We shall use that
\begin{equation}\label{61}
\ell_m(s,K,Q) \geq \frac {\sqrt{m(m+2)}}{m+1} | s + AK | 
\end{equation}
and that
\begin{align}\nonumber 
&  \left[ (s+AK)^2 + (t+AK)^2 + \frac m{1+m}( Q^2 +A K^2) \right]^2  - \left[ \frac 2{(1+m)} (s+AK)\cdot(t+AK)\right]^2 \\  \nonumber
& \geq \frac{m(m+2)} { (1+m)^2}   \left[ (s+AK)^2 + (t+AK)^2 + \frac m{1+m}( Q^2 +A K^2) \right]^2  \\ 
& \geq \frac{m(m+2)} { (1+m)^2}   \left[  \frac{m(2+m)}{2+4m+m^2} (s^2 + t^2) + \frac m{1+m} Q^2   \right]^2 
\end{align}
Together with the simple bound
\begin{equation}\label{63}
|s+AK|^{1/2} |t+AK|^{1/2} \leq \sqrt{ \frac 12 (s+AK)^2 + \frac 12 (t+AK)^2}
\end{equation}
this gives
\begin{align}\nonumber
\Lambda(m)& \leq  \frac{ (1+m)^2 }{\sqrt{2} \pi^2  \left[ m(m+2) \right]^{3/2} }   \sup_{s\in\R^3, Q>0}  \int_{\R^3}  \frac 1{t^2}    \frac { s^2 + Q^2 }{  \left[  \frac{m(2+m)}{2+4m+m^2} (s^2 + t^2) + \frac m{1+m} Q^2   \right]^{3/2}   }    \dI t \\
 & =  \frac{ 4 (1+m)^2 (2+4m +m^2)^{3/2}}{\sqrt{2} \pi  \left[ m(m+2) \right]^{3} }  \sup_{s\in\R^3, Q>0}   \frac { s^2 + Q^2 }{   s^2  +    \frac{2+4m+m^2}{(2+m)(1+m)} Q^2}   
 \end{align}
Since $2+4m+m^2> (2+m)(1+m)$, the last supremum equals $1$, and we obtain the bound \eqref{bound:L}.

The same strategy can be used to derive an upper bound on $\Lambda_\beta(m)$ in \eqref{def:Lmb}, for $\beta \leq 1$. Instead of \eqref{63}, one uses
\begin{equation}\label{65}
|s+AK|^{(1+\beta)/2} |t+AK|^{(1-\beta)/2} + |s+AK|^{(1-\beta)/2} |t+AK|^{(1+\beta)/2} \leq \sqrt{ 2 (s+AK)^2 + 2 (t+AK)^2}
\end{equation}
(which follows from convexity of the exponential function, $xy \leq  \frac 1p x^p + \frac 1q y^q$ for $x,y\geq 0$, $p > 1$, $\frac 1p + \frac 1q = 1$),  resulting in 
\begin{equation}
\Lambda_\beta (m) \leq    \frac{ 4 \sqrt{2} (1+m)^2 (2+4m +m^2)^{3/2}}{\pi  \left[ m(m+2) \right]^{3} } \quad \text{for $\beta \leq 1$.}
 \end{equation}
For $1<\beta<3$, we need an upper bound on $\ell_m$, and we shall simply use
\begin{equation}
\ell_m(s,K,Q) \leq \sqrt{ (s+AK)^2 + \frac m{m+1} (Q^2 + AK^2)}
\end{equation}
For a lower bound, we shall use \eqref{61} for one power of $\ell_m$, and 
\begin{equation}
\ell_m(s,K,Q) \geq \sqrt{ \frac m{m+1} (s^2 + Q^2)}
\end{equation}
for the remaining $\ell_m^{(\beta-1)/2}$. This leads to
\begin{align}\nonumber
\Lambda_\beta(m)& \leq  \frac{1}{\pi^2} \frac {(m+1)^{(\beta+7)/4}}{m^{(\beta+5)/4}(2+m)^{3/2}}    \sup_{s\in\R^3, Q>0}  \int_{\R^3}  \frac 1{t^2} \left( \frac 1{|t|^{(\beta-1)/2} } + \frac 1 {(s^2+Q^2)^{(\beta-1)/4}}  \right) \\ \nonumber & \qquad\qquad \qquad\qquad \qquad\qquad\qquad \times  \frac { s^2 + Q^2 }{  \left[  \frac{m(2+m)}{2+4m+m^2} (s^2 + t^2) + \frac m{1+m} Q^2   \right]^{(7-\beta)/4}   }    \dI t \\
 & =   \frac{4}{\pi} \frac {(m+1)^{(\beta+7)/4}}{m^{3}(2+m)^{(13-\beta)/4}}   \left( {2+4m+m^2} \right)^{(7-\beta)/4}   \left( \frac{2}{3-\beta} +  \frac{\sqrt{\pi}}{2} \frac{\Gamma((5-\beta)/4)}{ \Gamma((7-\beta)/4)} \right)  
 \end{align}
 for $1 < \beta < 3$, where $\Gamma$ denotes the gamma-function in the last expression. In particular, $\Lambda_\beta(m)$ is finite for $\beta <3$, and decays at least like $m^{-1}$ for large $m$.

\section{Numerical Evaluation of $\Lambda_\beta(m)$}\label{sec:numerics}

Recall the definition of $\Lambda(m)$ in \eqref{def:Lm}. In order to obtain a numerical value for $\Lambda(m)$, it is convenient to simplify this expression a bit. 
As a first step, we claim that, given $s$, the supremum over $K$ in \eqref{def:Lm} is attained at some $K$ of the form $K=- bs $ for $0\leq b \leq  1/A=2+m$. To see this, we substitute $\tilde s = s+AK$, $\tilde t = t + AK$, and rewrite \eqref{def:Lm} as
\begin{align}\nonumber 
\Lambda(m) & =  \sup_{\tilde s,K\in\R^3, Q>0}  \frac{(\tilde s-AK)^2 + Q^2}{\pi^2 (1+m)} \left( \frac{m(m+2)}{(m+1)^2} \tilde s^2 + \frac m{m+1} ( Q^2 + A K^2)  \right)^{-1/4} \\ \nonumber & \quad \quad  \times  \int_{\R^3}  \frac 1{(\tilde t-AK)^2}     \left( \frac{m(m+2)}{(m+1)^2} \tilde t^2 + \frac m{m+1} (Q^2+AK^2)  \right)^{-1/4}   \\ & \qquad\qquad \times   \frac { \left|\tilde s\cdot \tilde t\right|  }{ \left[ \tilde s^2 + \tilde t^2 + \frac m{1+m}( Q^2 +A K^2) \right]^2  - \left[ \frac 2{(1+m)} \tilde s\cdot \tilde t\right]^2}    \dI \tilde t \label{320}
\end{align}
Since the term on the last line is invariant under the reflection $\tilde t \mapsto - \tilde t$, the integral above is equal to 
\begin{align}\nonumber 
 &   \int_{\R^3}  \frac {\tilde t^2 + A^2 K^2}{(\tilde t^2+A^2K^2)^2 - 4A^2 (\tilde t\cdot K)^2}     \left( \frac{m(m+2)}{(m+1)^2} \tilde t^2 + \frac m{m+1} (Q^2+AK^2)  \right)^{-1/4}   \\ & \qquad\qquad \times \frac { \left|  \tilde s\cdot \tilde t \right|}{ \left[ \tilde s^2 + \tilde t^2 + \frac m{1+m}( Q^2 +A K^2) \right]^2  - \left[ \frac 2{(1+m)} \tilde s\cdot \tilde t\right]^2}     \dI \tilde t \label{321}
\end{align}
When optimizing over the orientation of $\tilde s$ and $K$, the very first factor after the supremum in \eqref{320} is clearly largest if $\tilde s$ and $K$ are antiparallel. That the same is true for the integral \eqref{321} is the content of the following lemma, whose proof is an easy exercise.

\begin{lemma}\label{lem2}
Let $f$ and $g$ be measurable  functions on $[-1,1]$ that are non-negative, even, and increasing on $[0,1]$. For $a,b\in \Ss^2$,  
\begin{equation}\label{intsph}
\int_{\Ss^2} f(\omega\cdot a) g(\omega\cdot b) \dI \omega
\end{equation}
is largest if $a$ and $b$ are either parallel or antiparallel (as vectors in $\R^3$).
\end{lemma}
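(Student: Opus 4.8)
My plan is to reduce the statement to the case where $f$ and $g$ are indicators, via the layer-cake representation, after which it becomes an elementary nesting observation. First I would use that $f$ is non-negative, even, and increasing on $[0,1]$ to write $f(u)=f(|u|)=\int_0^\infty \mathbf 1[f(|u|)>\lambda]\,\dI\lambda$, noting that by monotonicity the set $\{r\in[0,1]:f(r)>\lambda\}$ is an interval $(\alpha_\lambda,1]$ with $\alpha_\lambda:=\inf\{r\ge 0:f(r)>\lambda\}$. Hence, for almost every $\omega\in\Ss^2$,
\[
f(\omega\cdot a)=\int_0^\infty \mathbf 1_{A_\lambda}(\omega)\,\dI\lambda,\qquad A_\lambda:=\{\omega\in\Ss^2:|\omega\cdot a|>\alpha_\lambda\}.
\]
Here evenness of $f$ is exactly what turns each super-level set into a pair of antipodal caps around $\pm a$ rather than a single cap. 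Writing $g$ the same way, with cap-pairs $B_\mu=\{|\omega\cdot b|>\gamma_\mu\}$, Tonelli's theorem (all integrands non-negative) would give $\int_{\Ss^2} f(\omega\cdot a)g(\omega\cdot b)\,\dI\omega=\int_0^\infty\!\!\int_0^\infty |A_\lambda\cap B_\mu|\,\dI\lambda\,\dI\mu$, where $|\cdot|$ is surface measure. It then suffices to bound each overlap and check the bound is saturated at (anti)parallel configurations.

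The key — and essentially the only — point is that $|A\cap B|\le\min(|A|,|B|)$ always, whereas for $b=\pm a$ the two cap-pairs are coaxial and hence nested: since $|\omega\cdot(\pm a)|=|\omega\cdot a|$, one has $A=\{|\omega\cdot a|>\alpha\}$ and $B=\{|\omega\cdot a|>\gamma\}$, so the pair with the larger threshold is contained in the other and $|A\cap B|=\min(|A|,|B|)$. Thus every integrand $|A_\lambda\cap B_\mu|$ attains, simultaneously, its trivial upper bound $\min(|A_\lambda|,|B_\mu|)$ precisely when $a$ and $b$ are parallel or antiparallel. Integrating over $\lambda,\mu$ yields the claim and identifies the maximal value as $\int_0^\infty\!\int_0^\infty\min(|A_\lambda|,|B_\mu|)\,\dI\lambda\,\dI\mu$.

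I do not expect a genuine obstacle here; the whole content is the nesting remark, and the role of each hypothesis is transparent: non-negativity legitimizes the layer-cake and Tonelli steps, monotonicity on $[0,1]$ makes the super-level sets caps, and evenness forces them to be antipodally symmetric, which is exactly why both the parallel and the antiparallel arrangements are optimal (and why, without evenness, only the parallel one would be). The only thing needing routine care is the measure-theoretic bookkeeping: the super-level set may be closed rather than half-open at the single radius $r=\alpha_\lambda$, but the locus $\{|\omega\cdot a|=\alpha_\lambda\}$ is null on $\Ss^2$, so this does not affect the displayed identity. If desired I would also note at the outset that by rotational invariance the integral depends on $a,b$ only through $a\cdot b$ and, by evenness, only through $|a\cdot b|$, reducing attention to $0\le a\cdot b\le 1$; but the argument above does not actually require this reduction.
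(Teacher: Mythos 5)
Your proof is correct and follows essentially the same route as the paper's: both use the layer-cake representation to reduce to products of indicator functions of antipodal cap-pairs, and both conclude by noting that for (anti)parallel $a,b$ the coaxial cap-pairs are nested, so each overlap $|A_\lambda\cap B_\mu|$ attains its trivial upper bound $\min(|A_\lambda|,|B_\mu|)$ (which is independent of the relative orientation). Your write-up merely makes explicit the $\min$-bound and the measure-zero boundary issue that the paper's proof leaves implicit.
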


\begin{proof} We can represent the functions $f$ and $g$ by their level sets, and write
\begin{equation}\label{lsi}
\eqref{intsph} =  \int_{\Ss^2\times \R_+^2} \chi_{\{f > x \} }(\omega\cdot a) \chi_{\{g > y\}}(\omega\cdot b) \dI \omega \dI x \dI y
\end{equation}
The support of the function $\omega \mapsto \chi_{\{f > x \} }(\omega\cdot a)$ consists of  the union of two spherical caps, centered at $\pm a$, respectively, and similarly for $\chi_{\{g > y\}}(\omega\cdot b) $.  If $\pm a$ is parallel to $b$, the integral over $\Ss^2$ in \eqref{lsi} (for fixed $x$ and $y$) is clearly largest, since one of the characteristic functions simply equals $1$ on the support of the other in this case. This completes the proof.
\end{proof}

The angular part of the integral in \eqref{321} is exactly of the form \eqref{intsph}. We thus 
 conclude that we can restrict the supremum in \eqref{320} to the set where $K = - \kappa \tilde s$ for some $\kappa \geq 0$ or, equivalently, $K = - b s$ for some $0\leq b=\kappa/(1+\kappa A)\leq 1/A$.

To evaluate $\Lambda(m)$, we thus have to find the supremum over $\tilde s\in \R^3$, $\kappa\geq 0$ and $Q\geq 0$ of 
\begin{align}\nonumber &
  \frac{\tilde s^2 (1+\kappa A)^2 + Q^2}{\pi^2 (1+m)} \left( \frac{m(m+2)}{(m+1)^2} \tilde s^2 + \frac m{m+1} ( Q^2 + A \kappa^2 \tilde s^2)  \right)^{-1/4} \\ \nonumber & \quad \quad  \times  \int_{\R^3}   \frac {\tilde t^2 + A^2 \kappa^2 \tilde s^2}{(\tilde t^2+A^2 \kappa^2 \tilde s^2)^2 - 4A^2 \kappa^2 (\tilde t\cdot \tilde s)^2}     \left( \frac{m(m+2)}{(m+1)^2} \tilde t^2 + \frac m{m+1} (Q^2+A \kappa^2 \tilde s^2)  \right)^{-1/4}   \\ & \qquad\qquad \times  \frac {  \left|\tilde s\cdot \tilde t  \right|}{ \left[ \tilde s^2 + \tilde t^2 + \frac m{1+m}( Q^2 +A \kappa^2 \tilde s^2) \right]^2  - \left[ \frac 2{(1+m)} \tilde s\cdot \tilde t\right]^2}    \dI \tilde t \label{3201}
\end{align}
After carrying out the angle integration, this becomes
\begin{align}\nonumber &
 2  \frac{\tilde s^2 (1+\kappa A)^2 + Q^2}{\pi (1+m)} \left( \frac{m(m+2)}{(m+1)^2} \tilde s^2 + \frac m{m+1} ( Q^2 + A \kappa^2 \tilde s^2)  \right)^{-1/4} \\ \nonumber & \quad \quad  \times  \int_{0}^\infty    \frac { t^2}{t^2+A^2 \kappa^2 \tilde s^2}     \left( \frac{m(m+2)}{(m+1)^2}  t^2 + \frac m{m+1} (Q^2+A \kappa^2 \tilde s^2)  \right)^{-1/4}   \\ & \qquad\qquad \times  \frac {|\tilde s| t}{ \left[ \tilde s^2 +  t^2 + \frac m{1+m}( Q^2 +A \kappa^2 \tilde s^2) \right]^2  }  \frac {\ln (1-\lambda_1) - \ln(1-\lambda_2)}{\lambda_2-\lambda_1}    \dI  t \label{3202}
\end{align}
where
\begin{equation}
\lambda_1 = \frac{ 4 A^2 \kappa^2 t^2 \tilde s^2}{(t^2 + A^2\kappa^2 \tilde s^2)^2} \quad , \qquad \lambda_2 = \frac 4{(m+1)^2} \frac{t^2 \tilde s^2}{(t^2 + \tilde s^2 + \frac m{m+1} ( Q^2 + A\kappa^2 \tilde s^2))^2} 
\end{equation}

By the overall scale invariance, we can set $\tilde s^2 = 1$, and hence we are left with two parameters to optimize over, $Q\geq 0$ and $\kappa\geq 0$ or, equivalently, $0\leq b\leq 1/A=2+m$. It is not difficult to see that \eqref{3202} tends to zero as $Q\to \infty$ (uniformly in $b$) and thus the optimization is effectively over a compact set. The result of a numerical integration of \eqref{3202} in the case $m=1$ is shown in Figure~\ref{fig:maxPlotNum}. The supremum is attained at $Q=0$ and $b\approx 0.82$, and equals $\Lambda(1) \approx 0.34$. In particular, it is less than $1$. Moreover, the numerical evaluation yields $\Lambda(m)<1$ for all  $m\geq 0.36$, i.e., the critical mass for stability is less than $0.36$, as shown in Figure~\ref{fig:lambda1}.

The same analysis applies to $\Lambda_\beta(m)$ in \eqref{def:Lmb}. For $\beta=1$ and $\beta=2$, the graph of these functions is plotted in Figure~\ref{fig:lambda1}. 

\begin{figure}
\begin{center}
\includegraphics[width = 10cm]{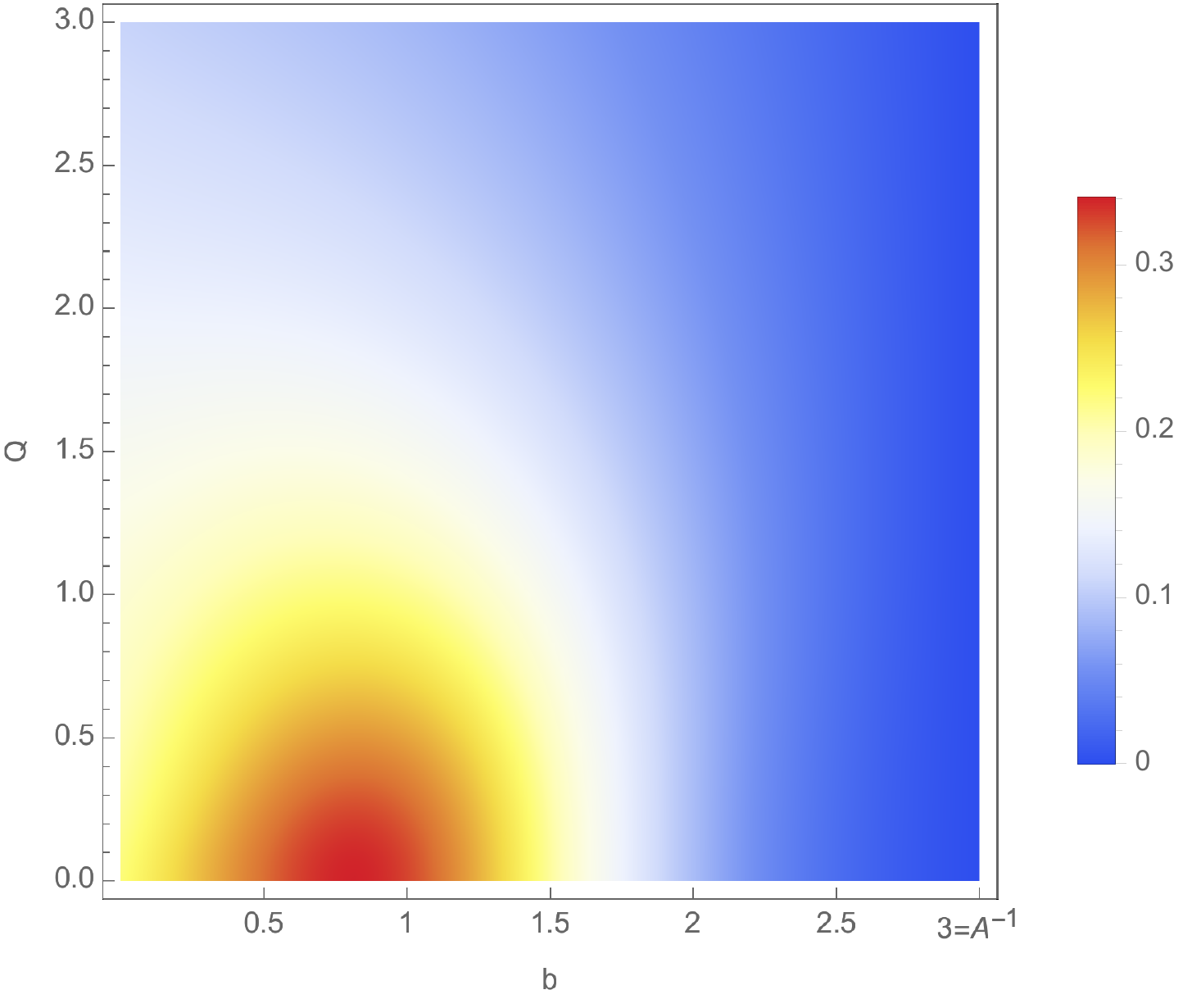}
\end{center}
\caption{Numerical evaluation of the expression  \eqref{3202} (for $\tilde s^2=1$), whose maximal value is $\Lambda(1)$. The maximum is attained at $Q=0$ and $b\approx 0.82$, and has a value $\Lambda(1) \approx 0.34$.}\label{fig:maxPlotNum}
\end{figure}

\section{Proof of Proposition~\ref{prop1}}\label{sec:propp}

Let $\psi \in D(\calH_\alpha)$, and consider 
 the partial Fourier transform
\begin{equation}
\eta(P, k_1,\ldots,k_{N-1} , r) = \frac 1{ (2\pi)^{3/2} } \int_{\R^3} \hat \psi\left( \tfrac{m}{1+m}P + q , k_1,\ldots,k_{N-1} , \tfrac{1}{1+m} P - q\right) e^{i r \cdot q} \dI q
\end{equation}
With the aid of \eqref{eq:tquadraticForm} and \eqref{def:calL}--\eqref{bc} we can write
\begin{align}\nonumber
\eta(P, k_1,\ldots,k_{N-1} , r) & =  \left( \frac {2\pi^2} {|r|} + \alpha\right) \frac{2m}{m+1}\frac{(-1)^{N+1}}{(2\pi)^{3/2}} \hat\xi(P,k_1,\ldots k_{N-1}) \\ & \quad + \sum_{j=1}^3 \kappa_j(P, k_1,\ldots,k_{N-1} , r)
\end{align}
where
\begin{equation}
\kappa_1(P, k_1,\ldots,k_{N-1} , r) = \frac 1{ (2\pi)^{3/2} }  \int_{\R^3} \hat \phi\left( \tfrac{m}{1+m}P + q , k_1,\ldots,k_{N-1} , \tfrac{1}{1+m} P - q\right) \left( e^{i r \cdot q} - 1 \right) \dI q 
\end{equation}
and 
\begin{align}\nonumber
& \kappa_2(P, k_1,\ldots,k_{N-1} , r) \\ \nonumber &= \frac 1{ (2\pi)^{3/2} }   \int_{\R^3} \calG\left( \tfrac{m}{1+m}P + q , k_1,\ldots,k_{N-1} , \tfrac{1}{1+m} P - q\right) \left(  e^{i r \cdot q}-1 \right)\\ & \qquad\qquad\quad \times  \sum_{j=1}^{N-1} (-1)^{j+1} \hat \xi\left( \tfrac{m}{1+m}P + q + k_j  , k_1,\ldots,k_{j-1}, k_{j+1} \dots,k_{N-1} , \tfrac{1}{1+m} P - q\right)    \dI q
\label{def:kappa2}
\end{align}
Introducing the function $f(t) = t^{-1} (e^{-t} - 1 + t)$ for $t>0$ we further have 
\begin{align}\nonumber
& \kappa_3(P, k_1,\ldots,k_{N-1} , r) \\ & = \frac{(-1)^{N+1}}{(2\pi)^{3/2}}  f \left( \frac {|r|}{2\pi^2} \frac{1+m}{2m} \calL(P,k_1,\ldots,k_{N-1}) \right)  \calL(P,k_1,\ldots,k_{N-1})  \hat\xi(P,k_1,\ldots k_{N-1})
\end{align}

Since $\phi\in H^2(\R^{3(N+1)})$, one readily checks that $\lim_{r\to 0} \| \kappa_1(\,\cdot\, ,r)\|_{L^2(\R^{3N})} = 0$. Moreover, since $\xi\in H^{1}(\R^{3N})$ by assumption, $\lim_{r\to 0} \| \kappa_3(\,\cdot\, ,r)\|_{L^{2}(\R^{3N})} = 0$ by dominated convergence, using $\lim_{t\to 0}f(t)=0$. The same holds true for $\kappa_2$ if we can show that 
\begin{equation}\label{tsl2}
\int_{\R^3} \calG\left( \tfrac{m}{1+m}P + q , k_1,\ldots,k_{N-1} , \tfrac{1}{1+m} P - q\right) \left|  \hat \xi\left( \tfrac{m}{1+m}P + q + k_1 , k_2, \ldots,k_{N-1} , \tfrac{1}{1+m} P - q\right)  \right|  \dI q 
\end{equation}
is an $L^2(\R^{3N})$ function. For this purpose,  
pick a function $\nu \in L^{2}(\R^3)\otimes L^{2}_{\rm as}(\R^{3(N-1)})$ and integrate the expression \eqref{tsl2} against $\nu(P,k_1,\ldots k_{N-1})$. After a change of integration variables, this gives 
\begin{equation}\label{inde}
\int_{\R^{3(N+1)}} \nu(k_0+k_N,k_1,\dots,k_{N-1}) \calG\left(k_0 , k_1,\ldots, k_N \right) \left|  \hat \xi\left( k_0 + k_1 , k_2, \ldots,k_N \right)  \right|  \dI k_0   \dI k_1 \cdots \dI k_{N} 
\end{equation}
Since $\xi \in H^1(\R^{3N})$ by assumption, Lemma~\ref{lem:nl1} (for $\beta=1$) implies that \eqref{inde} is finite. This shows that also $\| \kappa_2(\,\cdot\, ,r)\|_{L^{2}(\R^{3N})}$ goes to $0$ as  $r\to 0$, and thus completes the proof of  Proposition~\ref{prop1}. \hfill\qed

\section{Proof of Theorem~\ref{thm:tan}}\label{sec:tanproof}

We start with $n_\uparrow$. For $\psi = \phi + \calG \xi \in D(\cal H_\alpha)$, we have 
\begin{align}\nonumber
& k^2 n_\uparrow(k) - \frac{\con}{|k - p_\uparrow  |^2} 
\\ \nonumber &= k^2 \int_{\R^{3N}} | \hat \phi(k , k_1,k_2, \vec k)|^2 \dI k_1 \dI k_2 \dI \vec k
\\  \nonumber & \quad  - k^2 N(N-1) \int_{\R^{3N}}  \calG(k, k_1, k_2, \vec k)^2  \hat \xi^*(k+ k_1 , k_2,  \vec k)\hat\xi(k + k_2 , k_1,  \vec k )\dI k_1 \dI k_2 \dI \vec k 
\\  \nonumber & \quad  +  N  \int_{\R^{3N}} \left(  k^2 \calG(k, k_1, k_2, \vec k)^2 - \left( \frac {2m}{m+1}\right)^2 \frac {1}{|k-p_\uparrow|^2} \right)  |\hat \xi(k+k_1 ,  k_{2}, \vec k)|^2  \dI k_1 \dI k_2 \dI \vec k 
\\ & \quad  + 2 k^2 N \Re  \int_{\R^{3N}}  \hat \phi^*(k , k_1, k_2, \vec k)\calG(k, k_1, k_2, \vec k)  \hat \xi(k+ k_1 ,  k_{2},  \vec k)    \dI k_1 \dI k_2 \dI \vec k \label{nup}
\end{align}
where  $\vec k \in \R^{3(N-2)}$, as before. 
We write the right side as $\sum_{j=1}^4 M^\uparrow_j(k)$, with $M^\uparrow_j$ corresponding to the term on the $j$th line on the right side. The first term $M_1^\uparrow$ is clearly in $L^1(\R^3)$. Using \eqref{def:calG} the second term can be bounded as
\begin{equation}
|M_2^\uparrow(k)|   \leq    N(N-1) \int_{\R^{3N}}  \frac {4m}{k_1^2 + k_2^2 } | \hat\xi(k+ k_1 ,  k_{2}, \vec k) | | \hat\xi(k + k_2 , k_1, \vec k)| \dI k_1 \dI k_2 \dI \vec k 
\end{equation}
After integrating over $k$ and using the Cauchy-Schwarz inequality for the $(k, \vec k)$ integration, we get
\begin{align}\nonumber
\int_{\R^3} |M_2^\uparrow(k)|  \dI k  &  \leq    N(N-1) \int_{\R^{3N}}  \frac {4m}{k_1^2 + k_2^2 }  \| \hat\xi (\,\cdot\, ,k_1)\|_{L^2(\R^{3(N-1)})} \| \hat\xi (\,\cdot\, ,k_2)\|_{L^2(\R^{3(N-1)})}  \dI k_1 \dI k_2 \\& \leq  4m c N(N-1)^2  \| \xi \|_{H^{1/2}(\R^{3N})}^2 
\end{align}
where $c$ equals the norm of the operator with integral kernel $|k_1|^{-1/2} |k_2|^{-1/2} (k_1^2 + k_2^2)^{-1}$, which can easily be shown to be finite (and, in fact, equals $2\pi^2$ \cite[Lemma~2.1]{Finco2012}). 

Next we shall consider $M_3^\uparrow(k)$, which we rewrite as 
\begin{equation}\label{inse}
M_3^\uparrow(k) =  N  \int_{\R^{3N}} \left(  k^2 \calG(k, k_1-k, k_2, \vec k)^2 - \left( \frac {2m}{m+1}\right)^2 \frac {1}{|k-p_\uparrow|^2} \right)  |\hat \xi(k_1 ,  k_{2}, \vec k)|^2  \dI k_1 \dI k_2 \dI \vec k 
\end{equation}
Since $\xi \in L^2(\R^{3N})$, $M_3^\uparrow$ is clearly in $L^1_{\rm loc}(\R^3)$ and we only have to investigate its behavior for large $k$. If we write
\begin{equation}
k^2 \calG(k, k_1-k, k_2 , \vec k)^2 -  \left( \frac {2m}{m+1}\right)^2  \frac {1}{|k-p_\uparrow|^2} =  \left( \frac {2m}{m+1}\right)^2 \frac {2}{|k|^4} k \cdot \left( \frac {2m}{m+1} k_1 - p_\uparrow \right) + R_\uparrow(k, k_1, k_2, \vec k) 
\end{equation}
the first term on the right side gives zero after integration when inserted in \eqref{inse}, by the definition of $p_\uparrow$ in \eqref{defbark}. That is,
\begin{equation}\label{symfor}
M_3^\uparrow(k) =  N  \int_{\R^{3N}} R_\uparrow(k, k_1, k_2, \vec k)    |\hat \xi(k_1 ,  k_{2}, \vec k)|^2  \dI k_1 \dI k_2 \dI \vec k 
\end{equation}
Moreover, in the region  where $|k|^2 \geq \const ( \mu + p_\uparrow^2)$ we have 
\begin{equation}\label{readc}
| R_\uparrow (k, k_1, k_2, \ldots, k_N) | \leq \const   \frac 1{|k|^3}  \left( \mu + p_\uparrow^2 + \sum_{j=1}^N k_j^2 \right)^{1/2} \min \left\{ 1 , \frac 1{|k|} \left( \mu + p_\uparrow^2 + \sum_{j=1}^N k_j^2 \right)^{1/2} \right\}
\end{equation}
for suitable constants. If we integrate $R_\uparrow$ over $k$ in this region we thus obtain an expression that is bounded from above by $\const ( \mu + p_\uparrow^2 + \sum_{j=1}^N k_j^2 )^{1/2} \ln ( 1+  \mu + p_\uparrow^2 + \sum_{j=1}^N k_j^2)$, and we conclude, in particular, that $\| M_3^{\uparrow}\|_{L^1(\R^3)} \leq \const \| \xi\|_{H^{1}(\R^{3N})}^2$. 
Finally, using the simple pointwise bound 
\begin{equation}
|M_4^\uparrow(k) | \leq 4 m N \| \hat \phi(k, \,\cdot\,)\|_{L^2(\R^{3N})}  \| \xi\|_{L^2(\R^{3N})} 
\end{equation}
and the assumption that $\phi \in H^2(\R^{3(N+1)})$, the Cauchy-Schwarz inequality readily implies that $M_4^\uparrow \in L^1(\R^3)$. This concludes the proof that $ k^2 n_\uparrow(k) - {\con} |k - p_\uparrow |^{-2}$ is integrable.

Similarly we have for $n_\downarrow$
\begin{align}\nonumber
& k^2  n_\downarrow(k) - \frac{\con}{|k - p_\downarrow|^2}   = \sum_{j=1}^7 M_j^\downarrow(k) = 
\\ \nonumber &= N k^2 \int_{\R^{3N}} | \hat \phi(k_0 , k, k_2, \vec k)|^2 \dI k_0 \dI k_2  \dI \vec k 
\\  \nonumber & \quad  - k^2 N(N-1)(N-2) \int_{\R^{3N}}  \calG(k_0, k, k_2, \ldots, k_N)^2  \hat \xi^*(k_0+ k_2 ,  k , k_3, \ldots , k_N) 
\\ \nonumber & \qquad\qquad\qquad \qquad\qquad\qquad \times \hat\xi(k_0 + k_3 , k ,k_2, k_4,  \ldots, k_N)\dI k_0 \dI k_2 \cdots \dI k_N 
 \\  \nonumber & \quad - 2 k^2 N(N-1)  \int_{\R^{3N}}  \calG(k_0, k, k_2, \vec k)^2  \hat \xi^*(k_0+ k , k_2, \vec k)\hat\xi(k_0 + k_2 , k ,\vec k)\dI k_0 \dI k_2 \dI \vec k 
 \\  \nonumber & \quad  + k^2 N(N-1) \int_{\R^{3N}}  \calG(k_0, k, k_2, \vec k)^2  |\hat \xi(k_0+ k_2 ,  k ,\vec k)|^2 \dI k_0 \dI k_2 \dI \vec k
 \\  \nonumber & \quad  +N  \int_{\R^{3N}} \left(   k^2 \calG(k_0, k, k_2, \vec k)^2 - \left( \frac {2m}{m+1}\right)^2 \frac {1}{|k-p_\downarrow|^2} \right)  |\hat \xi(k_0+ k ,  k_{2}, \vec k)|^2  \dI k_0 \dI k_2 \dI \vec k 
 \\ \nonumber & \quad  + 2 k^2 N \Re \int_{\R^{3N}}  \hat \phi^*(k_0 , k, k_2,  \vec k)\calG(k_0, k, k_2, \vec k)  \hat \xi(k_0 + k ,  k_{2}, \vec k)    \dI k_0 \dI k_2 \dI \vec k
  \\  & \quad  + 2 k^2 N (N-1) \Re \int_{\R^{3N}}  \hat \phi^*(k_0 , k, k_2, \vec k)\calG(k_0, k, k_2, \vec k)  \hat \xi(k_0 + k_2  , k, \vec k)    \dI k_0 \dI k_2 \dI \vec k
  \label{ndown}
\end{align}
The terms $M_1^\downarrow$, $M_2^\downarrow$, $M_3^\downarrow$, $M_5^\downarrow$  and $M_6^\downarrow$ can be treated in the same way as the analogous terms in \eqref{nup} above. Eq.~\eqref{symfor} holds with $M_5^\downarrow$ in place of $M_3^\uparrow$ with $R_\uparrow$ replaced by 
\begin{equation}
R_\downarrow(k, k_1, k_2, \vec k) =  k^2   \calG(k_1-k, k, k_2 , \vec k)^2  - \left( \frac {2m}{m+1}\right)^2 \frac {1}{|k-p_\downarrow|^2} -  \left( \frac {2m}{m+1}\right)^2 \frac {2}{|k|^4} k \cdot \left( \frac {2}{m+1} k_1 - p_\downarrow \right) 
\end{equation}
which also satisfies the bound \eqref{readc}. 
The expression $M_4^\downarrow$ equals
\begin{equation}
M_4^\downarrow(k) =  k^2 N(N-1) \int_{\R^{3N}}  \calG(k_0-k_2 , k, k_2,  \vec k)^2  |\hat \xi(k_0,  k , \vec k)|^2 \dI k_0 \dI k_2 \dI \vec k 
\end{equation}
Performing the integration over $k_2$, one readily checks that
\begin{equation}
M_4^\downarrow(k) \leq \const  |k| N(N-1) \int_{\R^{3N}}    |\hat \xi(k_0,  k , \vec k)|^2 \dI k_0 \dI \vec k
\end{equation}
which is in $L^1(\R^3)$ since $\xi \in H^{1/2}(\R^{3N})$. Finally, using Cauchy-Schwarz in $(k,k_2,\vec k)$, 
\begin{equation}
\int_{\R^3} | M_7^\downarrow(k) | \dI k \leq 4N(N-1) \|\xi\|_{L^2(\R^{3N})}  \int_{\R^3} \| \hat \phi(k_0, \,\cdot\,)\|_{L^2(\R^{3N})} \dI k_0
\end{equation}
which is finite for $\phi\in H^2(\R^{3(N-1)})$, as remarked above. We  conclude, therefore, that also  $ k^2 n_\downarrow(k) - {\con}|k-p_\downarrow|^{-2}$ is integrable.

Since all the terms in \eqref{nup} and \eqref{ndown} are integrable, we can do the integration over $k$ term by term. For all the terms except $M_3^\uparrow$ and $M_5^\downarrow$, we have actually shown that the $L^1$-property holds even if the respective integrands are replaced by their absolute value, and hence we can freely use Fubini's theorem for these terms.  In the  form \eqref{symfor} (and the analogous expression for $M_5^\downarrow$) the same applies to $M_3^\uparrow$ and $M_5^\downarrow$, in fact.

For the norm of $\psi$, we shall write
\begin{align}\nonumber
\|\psi\|_{L^2(\R^{3(N+1)})}^2  & = \sum_{j=1}^4 n_ j  \\ \nonumber & = \|\phi\|_{L^2(\R^{3(N+1)})}^2 + 2 \Re \langle \phi | \calG \xi \rangle 
\\ \nonumber & \quad -  N(N-1) \int_{\R^{3N}}  \calG(k_0, k_1, k_2, \vec k)^2  \hat \xi^*(k_0 + k_1 ,  k_{2}, \vec k)\hat\xi(k_0 + k_2 , k_1, \vec k) \dI k_0 \dI k_1 \dI k_2 \dI \vec k  
\\  & \quad +  N \int_{\R^{3N}}  \calG(k_0, k_1, k_2, \vec k)^2  | \hat \xi(k_0 + k_1 ,  k_{2}, \vec k)|^2 \dI k_0 \dI k_1 \dI k_2 \dI \vec k 
\end{align}
We have 
\begin{equation}\label{n1}
 \int_{\R^3}  \left(  \frac 1 {2m}M_1^\uparrow(k)  +  \frac 12   M_1^\downarrow(k) \right) \dI k + \mu n_1  =  \left\langle \phi \left| - \frac 1 {2 m} \Delta_{x_0}    - \frac 1 2 \sum_{i=1}^N \Delta_{x_i} +\mu   \right| \phi\right\rangle  
\end{equation}
and
\begin{equation}\label{n3}
 \int_{\R^3}  \left[ \frac 1 {2m}  M_2^\uparrow(k) + \frac 12 \left( M_2^\downarrow(k) + M_3^\downarrow(k) \right) \right]  \dI k  + \mu n_3 =  - N \calT_{\rm off} (\xi)
\end{equation}
Moreover, we claim that 
\begin{equation}\label{n4}
 \int_{\R^3}  \left[ \frac 1 {2m}  M_3^{\uparrow}(k) + \frac 12 \left( M_4^\downarrow(k) + M_5^{\downarrow}(k)  \right)\right]  \dI k + \mu n_4 = - N  \calT_{\rm diag} (\xi)
 \end{equation}
To see this, note that we can replace $M_3^{\uparrow}(k)$ by its symmetrized version $\frac 12( M_3^{\uparrow}(k) +M_3^{\uparrow}(-k))$, and likewise for $M_5^\downarrow$. Then \eqref{n4} follows from the fact that  
\begin{align}\nonumber
 &\int_{\R^3} \left( \frac 1{4m} \left( R_\uparrow (k,k_1,\dots,k_N) + R_\uparrow (-k,k_1,\dots,k_N) \right) + \frac 1{4} \left( R_\downarrow (k,k_1,\dots,k_N) + R_\downarrow (-k,k_1,\dots,k_N)\right)  \right. \\ & \left.  \qquad + \frac 12 \sum\nolimits_{j=2}^N k_j^2  \calG(k_1-k, k, k_2,  \ldots, k_N)^2 \right) \dI k = - \calL(k_1,\ldots,k_N) 
\end{align}
which, in turn, uses that 
\begin{equation}
\int_{\R^3}  \left( \frac 2{ |k|^{2}}  - \frac 1 { |k-p |^{2}}   -  \frac 1 {|k+p|^{2}} \right) \dI k = 0
\end{equation}
for any $p \in \R^3$ (which can be proved, e.g., by computing the Fourier transform). 
Finally,
 \begin{align}\nonumber 
 & \int_{\R^3}  \left[ \frac 1 {2m}  M_4^\uparrow(k) + \frac 12 \left( M_6^\downarrow(k) + M_7^\downarrow(k)  \right)\right]  \dI k + \mu n_2 \\ & = 2 N  \Re  \int \hat \phi^*(k_0,k_1-k_0, k_2, \vec k) \hat \xi(k_1,k_2, \vec k) \dI k_0 \dI k_1 \dI k_2 \dI \vec k  \label{w7}
 \end{align}
In Fourier space, the boundary condition \eqref{bc} satisfied by $\phi$ reads  
\begin{equation}
\int \hat \phi(k_0,k_1-k_0, k_2, \vec k)  \dI k_0 =    \left(  \frac {2m}{m+1}\alpha\hat \xi + \widehat{\calGam\xi} \right)   (k_1,k_2, \vec k)
\end{equation}
and hence
\begin{equation}\label{n2}
\eqref{w7} = 2 N \left( \calT_{\rm diag} (\xi) + \calT_{\rm off} (\xi) + \frac {2m}{m+1}  \alpha \|\xi\|_{L^2(\R^{3N})}^2  \right)
\end{equation}
A combination of \eqref{n1}, \eqref{n3}, \eqref{n4}, \eqref{n2} with \eqref{def:cFal}  establishes \eqref{eq:tan} and thus completes the proof of Theorem~\ref{thm:tan}. \hfill\qed

\begin{remark}
The proof of Theorem~\ref{thm:tan} does not actually make use of the assumption $\xi\in H^{1}(\R^{3N})$, it is only used that 
\begin{equation}\label{condxi}
\int_{\R^{3N}} \left( 1 + \sum_{j=1}^N |k_j|^2 \right)^{1/2} \ln \left( 2 + \sum_{j=1}^N |k_j|^2\right) | \hat \xi(k_1,\ldots,k_N)|^2 \dI k_1 \cdots \dI k_N < \infty
\end{equation}
By Theorem~\ref{thm:domain}, this is actually the case if $\Lambda_0(m) = 2\Lambda(m) <1$ (instead of $\Lambda_1(m)<1$) since then, by continuity, $\Lambda_\beta(m)<1$ for some $\beta>0$, and hence $\xi \in H^{(1+\beta)/2}(\R^{3N})$.
\end{remark}

\section*{Acknowledgments}
Financial support by the 
 European Research Council (ERC) under the European Union's Horizon 2020 research and innovation programme (grant agreement No 694227), and by the Austrian Science Fund (FWF), project Nr. P 27533-N27, is gratefully acknowledged.

\end{document}